\newcommand{\dd}{\mathrm{d}}
\newcommand{\intrange}[2]{[#1\mathbin{:}#2]}
\newcommand{\R}{\mathbb{R}}
\newcommand{\simiid}{\overset{\text{iid}}{\sim}}
\newcommand{\cd}{\mathbin{\mid}}
\newcommand{\var}{\mathrm{var}}
\newcommand{\dNorm}{\mathcal{N}}
\newtheorem{proposition}{Proposition}[section]
\newtheorem{remark}{Remark}[section]
\newtheorem{definition}{Definition}[section]
\newtheorem{example}{Example}[section]
\title{Monte Carlo twisting for particle filters}
\author{ Joshua J Bon \\
	School of Mathematical Sciences\\
	Queensland University of Technology\\
	Brisbane, Australia \\
	\texttt{joshuajbon@gmail.com} \\
	\And
	Christopher Drovandi \\
	School of Mathematical Sciences\\
	Queensland University of Technology\\
	Brisbane, Australia \\
	\texttt{c.drovandi@qut.edu.au} \\
	\And
	Anthony Lee \\
	School of Mathematics\\
	University of Bristol\\
	Bristol, U.K. \\
	\texttt{anthony.lee@bristol.ac.uk} \\
}
\begin{document}
\maketitle

\begin{abstract}
We consider the problem of designing efficient particle filters for twisted Feynman--Kac models. Particle filters using twisted models can deliver low error approximations of statistical quantities and such twisting functions can be learnt iteratively. Practical implementations of these algorithms are complicated by the need to (i) sample from the twisted transition dynamics, and (ii) calculate the twisted potential functions. We expand the class of applicable models using rejection sampling for (i) and unbiased approximations for (ii) using a random weight particle filter. We characterise the average acceptance rates within the particle filter in order to control the computational cost, and analyse the asymptotic variance. Empirical results show the mean squared error of the normalising constant estimate in our method is smaller than a memory-equivalent particle filter but not a computation-equivalent filter. Both comparisons are improved when more efficient sampling is possible which we demonstrate on a stochastic volatility model.
\end{abstract}

\keywords{Random-weight sequential Monte Carlo \and Twisted Feynman--Kac model \and Rejection sampling}

\section{Introduction}
Particle filters are a class of algorithms that iteratively generate particle approximations to match a set of probability distributions. They are commonly used to estimate filtering or smoothing distributions in general state-space hidden Markov models \citep{doucet2011tutorial}. They also provide unbiased approximations of the normalising constant of a model, which can be used in pseudo-marginal Markov chain Monte Carlo methods \citep{andrieu2009pseudo,andrieu2010particle}, or for model selection and model averaging.

Efficient implementations of particle filters can be defined by altering the dynamics of the model underlying the filter. Look-ahead strategies \citep[see][for review]{lin2013lookahead} incorporate information from future observations in order to direct the particles to high probability regions of the state-space. The optimal look-ahead strategy utilising all future observations changes, or twists, the dynamics of the model to that of the backward information filter in state-space models \citep{bresler1986two,briers2010smoothing}. Such a particle filter is perfect in the sense that it provides an exact estimate of the normalising constant. Recent methods have proposed estimating these optimal twisting functions based on iterative approximations \citep{guarniero2017iterated,heng2020controlled} but are restricted to analytically tractable classes of models and twisting functions.

We develop methods to use twisting functions for particle filters in contexts where it is not currently possible to do so. In particular, models with Markov kernels that have intractable transition densities or more generally where the twisted dynamics cannot be evaluated exactly.

We now define some notation. For some measure $\mu$ and non-negative kernel $K$ on a measurable space $(\mathsf{X}, \mathcal{X})$, we define~$\mu(f) = \int_\mathsf{X}f(x)\mu(\dd x)$ and $K(f)(z) = \int_\mathsf{X}f(x)K(z,\dd x)$ for $z \in \mathsf{X}$, when $f:\mathsf{X} \rightarrow \mathbb{R}$ is measurable. We also write~$\mu K(S) = \int_{\mathsf{X}}\mu(\dd x) K(x, S)$ for $S \in \mathcal{X}$. If~${X_i \simiid \mu}$ then each random variable $X_i$ is independently and identically distributed according to $\mu$.
We denote $\mathcal{C}(a_{1},a_{2}\ldots,a_{m})$ as the categorical distribution with support $\{1,2,\ldots,m\}$ with probabilities proportional to $(a_{1},a_{2},\ldots,a_{m})$ for $a_i \geq 0$.
Let $y_{i:j}$ be the vector $(y_i, y_{i+1},\ldots, y_j)$ when $i<j$ and $y_{k:k} = y_{k}$. We denote the set of integers from $n_1$ up to $n_2$ as $\intrange{n_1}{n_2} = \{n_1,\ldots,n_2\}$ and define~$\prod_{i=1}^{0} y_{i} = 1$.

\section{Background}
\subsection{Discrete-time Feynman--Kac models}
Particle filters traverse a sequence of probability distributions, $\eta_0, \ldots, \eta_n$, that can be described recursively as follows. Consider the measurable space $(\mathsf{X}, \mathcal{X})$ with probability distribution $M_0$, Markov kernels $M_1,\ldots, M_n$, and functions $G_p : \mathsf{X} \rightarrow (0, \infty)$ for $p \in \intrange{0}{n}$ and where~$n$, the terminal time, is a positive integer. The predictive marginal measures, $\gamma_p$, and distributions, $\eta_p$, are defined as
\begin{equation}\label{eq:mmrecur}
    \gamma_{p}(S) = \int_{\mathsf{X}} \gamma_{p-1}(\dd x) G_{p-1}(x)M_{p}(x, S) \quad\text{and}\quad
    \eta_{p}(S) = \frac{\gamma_{p}(S)}{\gamma_{p}(\mathsf{X})}, \quad S \in \mathcal{X}, 
\end{equation} 
whilst the updated marginal measures, $\hat\gamma_p$, and distributions, $\hat\eta_p$, are
\begin{equation*}
    \hat\gamma_{p}(S) = \int_S\gamma_{p}(\dd x) G_{p}(x) \quad\text{and}\quad
    \hat\eta_{p}(S) = \frac{\hat\gamma_{p}(S)}{\hat\gamma_{p}(\mathsf{X})}, \quad S \in \mathcal{X}, 
\end{equation*} 
for $p \in \intrange{1}{n}$ with $\gamma_0 = \eta_0 = M_0$.  We will refer to a Feynman--Kac model as the tuple~$(M_{0:n},G_{0:n})$ which we relate to the terminal measure of the particle filter by noting that
\begin{equation*}
    \gamma_{n}(\varphi) =  E\left\{ \varphi(X_n)\prod_{p=0}^{n-1}G_p(X_p) \right\}, \quad X_0\sim M_0,\; X_p \sim M_p (X_{p-1},\cdot), \quad p \in \intrange{1}{n}.
\end{equation*} 
 
\begin{example}[Hidden Markov model]
A hidden Markov model is a bivariate Markov chain~$(X_p, Y_p)$ for $p \in \intrange{0}{n}$, where the latent variables $X_{0:n}$ are themselves a Markov chain governed by dynamics $X_0 \sim M_0$ and $X_p \sim M_p (X_{p-1},\cdot)$ for $p \in \intrange{1}{n}$. Moreover, the observation variable~$Y_p$ is independent (of all other variables) conditional on ${X_p}$ and admits the density $g_p(x, \cdot)$ given~$\{X_p = x\}$ for $p \in \intrange{0}{n}$. Given observed values $y_{0:(n-1)}$, and defining $G_p(\cdot) = g_p(\cdot, y_p)$ for $p \in \intrange{0}{n}$, the conditional distribution of the latent state $X_n$ is $\eta_n$ and marginal likelihood~$\gamma_n(1)$. An additional observation value,~$y_n$, can be incorporated by considering $\hat\eta_n$ and~$\hat\gamma_n(1)$, respectively.
\end{example}

\subsection{Particle filters}
Particle filters approximate Feynman--Kac models by iteratively generating collections of points, denoted by $\zeta_{p}^{i}$ for $i \in \intrange{1}{N}$, to approximate the sequence of probability measures~$\eta_{p}$ for $p \in \intrange{0}{n}$. The best known of these algorithms is the bootstrap particle filter \citep{gordon1993novel}, described in Algorithm~\ref{alg:bpf}.
\begin{algorithm}
\caption{The Bootstrap Particle Filter}
\label{alg:bpf}
\begin{enumerate}
    \item Sample initial $\zeta^{i}_{0} \simiid \eta_{0}$ for $i \in \intrange{1}{N}$
    \item For each time $p = 1,2,\ldots, n$
    \begin{enumerate}[label=(\roman*)]
        \item Sample ancestors $A^{i}_{p-1} \sim \mathcal{C}\left(G_{p-1}(\zeta^{1}_{p-1}), \ldots, G_{p-1}(\zeta^{N}_{p-1}) \right)$ for $i \in \intrange{1}{N}$
        \item Sample prediction $\zeta^{i}_{p} \sim M_{p}(\zeta^{A^{i}_{p-1}}_{p-1},\cdot)$ for $i \in \intrange{1}{N}$
    \end{enumerate}
\end{enumerate}
\textbf{Output:} Particles $\zeta^{1:N}_p$ for $p = 0,1,\ldots,n$
\end{algorithm}
After running the bootstrap particle filter the approximate predictive measures are
\begin{equation*}
    \eta_{p}^{N} = \frac{1}{N}\sum_{i=1}^{N}\delta_{\zeta^{i}_p}, \quad \gamma_{p}^{N} = Z_{p}^{N}\eta_{p}^{N}, \quad p \in \intrange{0}{n},
\end{equation*}
where $Z_p^{N} \equiv \gamma_p^{N}(1) = \prod_{t=0}^{p-1} \eta_t^{N}(G_t)$. The updated measures are
\begin{align*}
    \hat\eta_{p}^{N} = \sum_{i=1}^{N}W_{p}^{i}\delta_{\zeta^{i}_p}, \quad \hat\gamma_{p}^{N} = \hat Z_{p}^{N}\hat\eta_{p}^{N}, \quad p \in \intrange{0}{n},
\end{align*}
where $W_{p}^{i} =  \{\sum_{j=1}^N G_p(\zeta^{j}_p)\}^{-1}G_p(\zeta^{i}_p)$ and $\hat{Z}_p^{N} \equiv \hat\gamma_p^{N}(1) = \prod_{t=0}^{p} \eta^{N}_t(G_t)$.

The resampling step, 2(i) of Algorithm~\ref{alg:bpf}, need not occur each iteration of the particle filter. It can also occur dynamically, typically by monitoring the effective sample size of the particle population \citep{kong1994sequential,liu1995blind,del2012adaptive}. In this case resampling occurs when the effective sample size drops below $\kappa N$, where $\kappa \in [0,1]$ is user specified. 

\subsection{Twisted Feynman--Kac models}
Twisted Feynman--Kac models \citep{guarniero2017iterated} change the Markov kernels and potential functions of the underlying model whilst preserving the terminal marginal measure and terminal path measure. In particular, for a set of twisting functions $\psi_p : \mathsf{X} \rightarrow (0, \infty)$ where $M_p(\psi_p)$ is bounded for $p \in \intrange{0}{n}$, the Markov kernels undergo the change of measure
\begin{equation}\label{eq:twistmk}
    M^{\psi}_{0}(\dd x_0) = \frac{M_{0}(\dd x_0)\psi_{0}(x_0)}{M_0(\psi_{0})}, \qquad 
    M^{\psi}_{p}(x_{p-1}, \dd x_{p}) = \frac{M_{p}(x_{p-1}, \dd x_{p})\psi_p(x_p)}{M_p(\psi_{p})(x_{p-1})}, 
\end{equation}
for $p \in \intrange{1}{n}$. Whilst the twisted potential functions are
\begin{equation}\label{eq:twistpot}
\begin{split}
    G^{\psi}_{0}(x_0) &= \frac{G_{0}(x_0)}{\psi_{0}(x_{0})}M_1(\psi_{1})(x_{0}) M_{0}(\psi_0),
    \quad
    G^{\psi}_{n}(x_n) = \frac{G_{n}(x_n)}{\psi_n(x_n)}, \\
    G^{\psi}_{p}(x_p) &= \frac{G_{p}(x_p)}{{\psi_{p}(x_{p})}}M_{p+1}(\psi_{p+1})(x_{p}),
\end{split}
\end{equation}
for $p \in \intrange{1}{n-1}$. We focus on twisting where the terminal updated marginal measure and distribution, $\hat\gamma_{n}$ and $\hat\eta_{n}$, are preserved, though it is trivial to preserve their predictive counterparts by setting~$\psi_n(x) = 1$. We denote the marginal measures and distributions of this twisted model with superscript $\psi$, e.g. $\gamma_p^\psi$. For a given Feynman--Kac model, a recursive definition for an optimal sequence of twisting functions is described in Proposition~\ref{prop:opttwist}.
\begin{proposition}\label{prop:opttwist}
Consider a twisted Feynman--Kac model $(M_{0:n}^{\psi^\star},G_{0:n}^{\psi^\star})$ with twisting functions $\psi_{n}^{\star} = G_n$ and $\psi_{p}^{\star} = G_{p} \cdot M_{p+1}( \psi^{\star}_{p+1})$ for $p \in \intrange{0}{n-1}$. The particle approximation of the normalising constant $(\hat{\gamma}_n^{\psi^{\star}})^{N}(1) = \hat{\gamma}_n(1)$ almost surely for any $N \geq 1$. 
\end{proposition}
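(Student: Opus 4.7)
The plan is to show that under the optimal twisting $\psi^\star$, every twisted potential $G_p^{\psi^\star}$ collapses to a \emph{constant} function, from which it follows immediately that the particle approximation of the normalising constant carries no Monte Carlo randomness at all.

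First, I would substitute $\psi_p^\star$ into the definitions in \eqref{eq:twistpot}. For the terminal step, since $\psi_n^\star = G_n$, we get $G_n^{\psi^\star}(x_n) = G_n(x_n)/\psi_n^\star(x_n) = 1$. For intermediate $p \in \intrange{1}{n-1}$, the recursion $\psi_p^\star = G_p \cdot M_{p+1}(\psi_{p+1}^\star)$ gives $G_p^{\psi^\star}(x_p) = G_p(x_p) M_{p+1}(\psi_{p+1}^\star)(x_p)/\psi_p^\star(x_p) = 1$. Finally, for $p = 0$, the same cancellation leaves $G_0^{\psi^\star}(x_0) = M_0(\psi_0^\star)$, a constant independent of $x_0$.

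Now I would invoke the explicit form $\hat{Z}_n^{\psi^\star,N} = \prod_{t=0}^{n}\eta_t^{\psi^\star,N}(G_t^{\psi^\star})$. Since $\eta_t^{\psi^\star,N}$ is a probability measure and each $G_t^{\psi^\star}$ is constant, the empirical averages equal those constants exactly, yielding $\hat Z_n^{\psi^\star,N} = M_0(\psi_0^\star) \cdot 1^{n} = M_0(\psi_0^\star)$ almost surely for every $N \geq 1$.

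It remains to identify $M_0(\psi_0^\star)$ with $\hat{\gamma}_n(1)$. I would unroll the recursion by repeatedly applying Fubini together with $\psi_p^\star = G_p \cdot M_{p+1}(\psi_{p+1}^\star)$:
\begin{equation*}
M_0(\psi_0^\star) = M_0\bigl(G_0\,M_1(\psi_1^\star)\bigr) = \int M_0(\dd x_0)\,G_0(x_0)\int M_1(x_0,\dd x_1)\,\psi_1^\star(x_1),
\end{equation*}
and iterating through $p = 2,\ldots,n$ with the base case $\psi_n^\star = G_n$ produces
\begin{equation*}
M_0(\psi_0^\star) = E\!\left\{\prod_{p=0}^{n}G_p(X_p)\right\} = \hat{\gamma}_n(1),
\end{equation*}
where $X_0 \sim M_0$ and $X_p \sim M_p(X_{p-1},\cdot)$. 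Combining the two displays finishes the proof. There is no serious obstacle: the crux is the algebraic observation that the optimal twist is precisely the choice that eliminates the $x$-dependence of each $G_p^{\psi^\star}$; one only needs to be careful about the exceptional role of the boundary indices $p = 0$ and $p = n$, where the $M_0(\psi_0)$ factor and the $\psi_n = G_n$ choice respectively need to be tracked.
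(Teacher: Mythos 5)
Your proof is correct, and it is the standard argument: the paper itself does not include a proof of Proposition~2.1 but defers to \citet{guarniero2017iterated} and \citet{heng2020controlled}, whose proofs follow precisely this route of showing each $G_p^{\psi^\star}$ is constant so that $\hat{Z}_n^{\psi^\star,N}$ is deterministic and equal to $M_0(\psi_0^\star) = \hat{\gamma}_n(1)$. Your care with the boundary terms at $p = 0$ (where the extra $M_0(\psi_0^\star)$ factor survives) and $p = n$ (where $\psi_n^\star = G_n$ gives $G_n^{\psi^\star} \equiv 1$) is exactly right, and the Fubini unrolling to identify $M_0(\psi_0^\star)$ with $\hat{\gamma}_n(1) = \gamma_n(G_n)$ closes the argument cleanly.
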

A proof of Proposition~\ref{prop:opttwist} can be found in \citet{guarniero2017iterated} or \citet{heng2020controlled}. It can also be shown that a twisted Feynman--Kac model is invariant to scaling of the twisting functions. 

Exact usage of the optimal twisting functions is typically intractable since the calculation of the optimal twisting functions are just as difficult as expectations with respect to the original Feynman--Kac model. Moreover, one must be able to generate samples from the twisted mutation functions in \eqref{eq:twistmk}, which may be without a known form  and involve an intractable normalising constant, which is also required for the twisted potential functions.

Despite these difficulties, approximating the optimal twisting functions has proved beneficial for designing efficient particle filters in some important cases. \citet{guarniero2017iterated} and \citet{heng2020controlled} use the recursive definition of the optimal twisting functions (Proposition~\ref{prop:opttwist}) to motivate a parametric approximation of the optimal functions via regression using a previous particle approximation. In this context, the class of Feynman--Kac models and twisting functions considered thus far have generally involved Gaussian Markov kernels and twisting functions consisting of (mixtures of) exponential-quadratic functions. Restricting attention to this class has ensured exact sampling and analytical calculation of the twisted potentials is possible. Our approach, described in the sequel, presents a novel method for utilising twisted models in particle filters when one or both of these assumptions fail.

\section{Particle filters for twisted models}

\subsection{Twisting the Markov kernels}

In order to achieve exact twisting of the Markov kernels, without assuming a particular distributional form, we use rejection sampling. We assume that $\psi_p : \mathsf{X} \to [\epsilon,1]$ with $0< \epsilon <1$ for each $p \in \intrange{0}{n}$. Due to scale-invariance of twisted models this only imposes that the twisting functions are bounded. To sample $X_p \sim M_p^{\psi}(x_{p-1}, \cdot)$, one can sample a candidate $X^\prime \sim M_p(x_{p-1}, \cdot)$ and accept $X_p = X^\prime$ with probability $\psi_p(X^\prime)$, generating independent proposals until the first is accepted. The lower bound, $\epsilon$, ensures a finite acceptance time and bounds the average computation cost by $\epsilon^{-1}$ from above.


Consider a $\omega$-twisted Feynman-Kac model, where $\omega_p : \mathsf{X} \to [\epsilon,1]$ for each $p \in \intrange{0}{n}$. Rejection sampling from $M_p^{\omega}(x_{p-1}, \cdot)$ can be computationally costly if the expected acceptance probability,~$M_p(\omega_p)(x_{p-1})$, is low. As such, we wish to control the average acceptance rate of the rejection sampler at each time $p$ within the particle filter. The idealised average acceptance rates can be expressed as~$\alpha_{0}^{\omega} \equiv M_0(\omega_0)$ and~$\alpha_{p}^{\omega} \equiv \hat{\eta}_{p-1}^{\omega}M_p(\omega_p)$ for $p \in \intrange{1}{n}$. Estimates of these expressions require particle approximations of a $\omega$-twisted model, involving the exact algorithm we are trying to control the acceptance rates for. As such, we seek an alternative representation of the average acceptance rates, which can be estimated from a previous (twisted-model) particle filter. Proposition~\ref{prop:acceptrate} provides a general result that is able to express the average acceptance rates in this form (proof in the supplementary materials).

\begin{proposition}\label{prop:acceptrate}
Consider a rejection sampler for $M^{\omega}_p$ using $M_p^\varrho$ as the proposal distribution with $\sup_{x \in \mathsf{X}}\omega_p(x)\varrho_p(x)^{-1} = 1$, $p \in \intrange{0}{n}$. The average acceptance rate of this sampler within an $\omega$-twisted model $(M_{0:n}^{\omega},G_{0:n}^{\omega})$ can be written with respect to $(M_{0:n}^{\psi},G_{0:n}^{\psi})$ as
\begin{align*}
    \alpha_{0}^{\omega,\varrho} &\equiv M_0^\varrho(\omega_{0} / \varrho_0) = \frac{M_0(\omega_0)}{M_0(\varrho_0)}, \\
     \alpha_{p}^{\omega,\varrho} &\equiv \hat{\eta}_{p-1}^{\omega}M_p^\varrho(\omega_p / \varrho_p) =  \frac{\hat\eta_{p-1}^{\psi}[ M_{p}(\omega_{p})^2 / \{ M_{p}(\psi_{p}) \cdot M_p(\varrho_p)\}]}{\hat\eta_{p-1}^{\psi}\{ M_{p}(\omega_{p}) / M_{p}(\psi_{p})\}}, \quad p \in \intrange{1}{n}.
\end{align*}
\end{proposition}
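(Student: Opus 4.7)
The plan is to separate the idealised acceptance rate into (i) the per-particle acceptance probability of the rejection sampler and (ii) an expectation under the $\omega$-twisted updated distribution, then convert (ii) to an expectation under $\hat\eta_{p-1}^\psi$ via a Radon--Nikodym derivative that emerges cleanly from the twisting construction. For (i), I would integrate directly using $M_p^\varrho(x_{p-1}, \dd x) \propto M_p(x_{p-1},\dd x)\,\varrho_p(x)$ and acceptance probability $\omega_p/\varrho_p$ to obtain the conditional acceptance rate at $x_{p-1}$ as $M_p(\omega_p)(x_{p-1})/M_p(\varrho_p)(x_{p-1})$. This already yields $\alpha_0^{\omega,\varrho} = M_0(\omega_0)/M_0(\varrho_0)$, and for $p \ge 1$ gives $\alpha_p^{\omega,\varrho} = \hat\eta_{p-1}^\omega\{M_p(\omega_p)/M_p(\varrho_p)\}$.

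The core step is a lemma relating any twisted updated distribution to the untwisted one: for a generic bounded twisting $\phi$,
\begin{equation*}
\hat\eta_{p-1}^\phi(\varphi) = \frac{\hat\eta_{p-1}\{\varphi\,M_p(\phi_p)\}}{\hat\eta_{p-1}\{M_p(\phi_p)\}}.
\end{equation*}
To prove it, I would substitute the definitions~\eqref{eq:twistmk} and~\eqref{eq:twistpot} into the expectation $\gamma_{p-1}^\phi(\varphi) = E^\phi\{\varphi(X_{p-1}) \prod_{t=0}^{p-2} G_t^\phi(X_t)\}$ and observe two layers of cancellation: the $\phi_t$ numerators in $M_t^\phi$ against the $\phi_t^{-1}$ in $G_t^\phi$, and the normalising factors $M_t(\phi_t)(x_{t-1})$ in the kernels against the $M_{t+1}(\phi_{t+1})(x_t)$ factors in the potentials. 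What remains is the clean identity $\gamma_{p-1}^\phi(\dd x) = \phi_{p-1}(x)\,\gamma_{p-1}(\dd x)$, with a minor boundary adjustment at $p-1=0$ where the extra $M_0(\phi_0)$ factor in $G_0^\phi$ absorbs the missing normaliser of $M_0^\phi$. Multiplying by $G_{p-1}^\phi$ and normalising then gives the displayed formula.

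Finally, I would apply this lemma twice: once with $\phi = \omega$ to rewrite $\hat\eta_{p-1}^\omega\{M_p(\omega_p)/M_p(\varrho_p)\}$ as a ratio of integrals against $\hat\eta_{p-1}$, and once with $\phi = \psi$ to express those integrals under $\hat\eta_{p-1}^\psi$. The common normaliser $\hat\eta_{p-1}\{M_p(\psi_p)\}$ then cancels between numerator and denominator, delivering exactly the stated expression. The main obstacle is the bookkeeping in the core lemma: tracking four interleaved products across times $t = 0, \ldots, p-2$ and verifying that both layers of cancellation really do leave only $\phi_{p-1}(x_{p-1})$ behind. Once the lemma is secured, the rest reduces to algebraic substitution.
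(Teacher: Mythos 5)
Your proof is correct, and it takes a genuinely different route to the same destination. The paper decomposes $\omega_p = \psi_p\cdot\phi_p$ with $\phi_p = \omega_p/\psi_p$, invokes a ``twist-of-twist'' composition result (twisting the $\psi$-twisted model by $\phi$ yields the $(\psi\cdot\phi)$-twisted model) together with a proposition relating $M_p^\psi(\phi_p)$ to $M_p(\omega_p)/M_p(\psi_p)$, and then applies the marginal-measure identity $\hat\gamma_{p-1}^{\psi\cdot\phi}(\dd x) = \hat\gamma_{p-1}^\psi(\dd x)\,M_p^\psi(\phi_p)(x)$ \emph{with the $\psi$-twisted model as the intermediate base}. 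You instead prove (essentially the same marginal identity, $\hat\gamma_{p-1}^\phi(\dd x) = \hat\gamma_{p-1}(\dd x)\,M_p(\phi_p)(x)$) against the \emph{untwisted} base, normalise it to your change-of-measure lemma, and apply it twice --- once with $\phi=\omega$ to pull $\hat\eta_{p-1}^\omega$ back to $\hat\eta_{p-1}$, once with $\phi=\psi$ to push forward to $\hat\eta_{p-1}^\psi$ --- with the common factor $\hat\eta_{p-1}\{M_p(\psi_p)\}$ cancelling in the ratio. Your route avoids the twist-of-twist and mutation-twist lemmas entirely, so it is more self-contained and arguably more elementary; the paper's route is more modular, reusing machinery that is needed elsewhere in the appendix. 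Both the two-layer cancellation you describe (the $\phi_t$ factors against the $1/\phi_t$ in $G_t^\phi$, and the $M_t(\phi_t)$ normalisers against the $M_{t+1}(\phi_{t+1})$ factors in the potentials) and the $p-1=0$ boundary adjustment via the extra $M_0(\phi_0)$ in $G_0^\phi$ check out, and the final double application of the lemma yields exactly the stated ratio.
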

\begin{remark}
If the proposal distribution is the base Markov kernel, $M_p$, then $\varrho_p = 1$ leads to
\begin{equation*}
\alpha_{0}^{\omega} =  \alpha_{0}^{\omega,1} = M_0(\omega_0), \quad \alpha_{p}^{\omega} = \alpha_{p}^{\omega,1} = \frac{\hat\eta_{p-1}^{\psi}\{ M_{p}(\omega_{p})^2 / M_{p}(\psi_{p}) \}}{\hat\eta_{p-1}^{\psi}\{ M_{p}(\omega_{p}) / M_{p}(\psi_{p})\}}, \quad p \in \intrange{1}{n}.
\end{equation*}
\end{remark}
\begin{remark}
For some Markov kernels, there may exist an exact sampler for $M^\varrho_p$ that we wish to use as the proposal distribution when rejection sampling for $M^\omega_p$. In this case, we accept a proposal $X^\prime \sim M^\varrho_p(x_{p-1}, \cdot)$ when $\omega_p(X^\prime)\varrho_p(X^\prime)^{-1} > U$ and therefore the conditional acceptance rate is~${M_p^\varrho(\omega_p /\varrho_p)}$. Here, the result of Proposition~\ref{prop:acceptrate} directly expresses the average acceptance rates in terms of a $\psi$-twisted model. We refer to this scenario as partial analytical twisting.
\end{remark}
Characterising the average acceptance rates in terms of an alternative model (i.e. the $\psi$-twisted model) allows us to estimate and ultimately control the acceptance rate in the iterative learning procedure detailed in the next section.

\subsection{Learning twisting functions}
We propose to learn a set of twisting functions $\psi_{0:n}^\prime$ from a previous particle approximation according to the backwards recursion given in Proposition~\ref{prop:opttwist}. Initially, the base Feynman--Kac model is used with~$\psi_p(x) = 1$ for $p \in \intrange{0}{n}$ and $x \in \mathsf{X}$. In subsequent iteration improve upon the approximation of the optimal twisting functions using the most recent set of particles.
\begin{algorithm}
\caption{Learning the twisting functions}
\label{alg:learntwist}
Given particles $\zeta_p^{1:N} \sim \eta_p^{\psi}$ for $p \in \intrange{0}{n}$, $\tilde{N}$ Monte Carlo samples, target acceptance rate $\alpha_\text{min}$, and class of approximation functions $\mathsf{H}_p$ for $p \in \intrange{0}{n-1}$
\begin{enumerate}
    \item $\psi^\prime_n(x_n) = G_n(x_n)^{\beta_n}$ with $\beta_n$ chosen to target $\alpha_\text{min}$ 
    \item For each time $p = n-1,\ldots, 1, 0$
    \begin{enumerate}[label=(\roman*)]
        \item Set $\lambda_p^{\tilde{N}}(i) = \frac{G_p(\zeta_{p}^i)}{\psi_p(\zeta_{p}^i)}M_{p+1}^{\tilde{N}}( \psi_{p+1}^{\prime})(\zeta_{p}^{i})$ for $i \in \intrange{1}{N}$
        \item Find $\lambda_p = \arg\min_{h \in \mathsf{H}_p} \sum_{i=1}^N\left(\log\lambda_p^{\tilde{N}}(i)  - \log h(\zeta_{p}^i) \right)^2$
        \item Set $\psi^\prime_p = ( \psi_p \cdot \lambda_p)^{\beta_p}$ with $\beta_p$ chosen to target $\alpha_\text{min}$ 
    \end{enumerate}
\end{enumerate}
\textbf{Output:} $\psi^\prime_{0:n}$ 
\end{algorithm}

Algorithm~\ref{alg:learntwist} is similar to the learning procedures of the iterated auxiliary particle filter \citep{guarniero2017iterated} and controlled sequential Monte Carlo \citep{heng2020controlled} but there are a few notable differences. Our approach generally uses the terminal potential directly as the twisting function at time $n$ and we use a Monte Carlo estimate of $M_{p+1}(\psi^{\prime}_{p+1})$, denoted by $M^{\tilde N}_{p+1}(\psi^{\prime}_{p+1})$, rather than requiring analytical tractability of the integral. The unbiased estimate is
\begin{equation}\label{eq:mintest}
    M^{\tilde N}_{p}(\psi_{p})(x) = \tilde{N}^{-1}\sum_{j=1}^{\tilde N} \psi_p(U^{i}), \quad U^{i} \simiid M_{p}(x, \cdot)
\end{equation}
for some twisting function $\psi_{p}$ and $x \in \mathsf{X}$.
The main difference between our approach and that of \citet{guarniero2017iterated} or \citet{heng2020controlled} is the adjustment of the new twisting function by an inverse temperature $\beta_p$ in Algorithm~\ref{alg:learntwist}. Whilst the lower bound,~$\epsilon$, specifically controls the worst case computational cost, solely relying on this mechanism can result in most or all proposals being accepted via this bound. In this case we are essentially sampling from the untwisted $M_p$.  Using the inverse temperature allows us to target a particular computational cost (via the acceptance rate) whilst retaining some informativeness in the twisting functions. 

Proposition~\ref{prop:acceptrate} allows us to calculate the acceptance rate for a given $\beta_p$, using the particles available in Algorithm~\ref{alg:learntwist}. We detail our strategy for approximating the $\beta_p$ required for a given~$\alpha_\text{min}$ based on this relation in the supplementary materials (Algorithm~\ref{alg:learntemper}).

Moreover, if there exists a family of functions $\mathsf{R}$ for which $M^\varrho_p$ can be sampled from easily when $\varrho_p \in \mathsf{R}$, then we may wish to use partial analytical twisting. In this case, we can choose $\varrho_p$ such that the acceptance rate is maximised by again estimating the rate with Proposition~\ref{prop:acceptrate}.  This method is also presented in the supplementary materials (Algorithm~\ref{alg:maxacc}). Tempering can also be applied if necessary, but if not we simply set $\beta_p = 1$.

\subsection{Effect of approximating the twisted potential functions}

Evaluating the twisted potentials in \eqref{eq:twistpot} for the particle filter requires calculation of integrals~$M_p(\psi_p)$ for $p \in \intrange{0}{n}$. Analytical calculation is not possible, so we use an unbiased approximation of these integrals in our particle filter. In particular we use \eqref{eq:mintest} for $p \in \intrange{1}{n}$ and the analogous estimate for $p = 0$.

We analyse the effect of the twisted potential approximation within the particle filter using sequential Monte Carlo (SMC) with random weights. Random-weight SMC has been considered in a variety of contexts including SMC for estimating parameters of stochastic differential equations \citep{beskos2006exact,fearnhead2010random}. We show that using the approximation \eqref{eq:mintest} leads to unbiasedness of the normalising constant estimate (see supplementary materials). Moreover, we state the asymptotic variance (see supplement for definition) of the random-weight SMC algorithm as follows.

\begin{proposition}\label{prop:asyvar} 
Consider the models $(M_{0:n}, \bar{G}_{0:n})$ and $(M_{0:n}, G_{0:n})$ with marginal measures~$\bar\gamma_{0:n}$ and $\gamma_{0:n}$ respectively. If $\bar{G}_{p}$ is an unbiased estimate of $G_p$ then the asymptotic variance of $\bar\gamma_n^{N}(\varphi)/\gamma_n(1)$ is  $\bar\sigma_n^{2}(\varphi) = \sum_{p=0}^n \bar{v}_{p,n}(\varphi)$ for bounded $\varphi$ where
\begin{equation*}
    \bar{v}_{p,n}(\varphi) = \frac{\gamma_p(1)\gamma_p(s_p \cdot Q_{p,n}(\varphi)^2)}{\gamma_n(1)^2} - \eta_{n}(\varphi)^2,\quad p \in \intrange{0}{n}
\end{equation*}
with $s_p(x_p) = \var\{\bar{G}_p(x_p)/G_p(x_p)\} + 1$ for $p \in \intrange{0}{n-1}$ and $s_n(x_n) = 1$.
\end{proposition}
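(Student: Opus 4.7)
The plan is to realise the random-weight particle filter as a standard particle filter on an enlarged state space, so that the classical Feynman--Kac central limit theorem (e.g.\ Del Moral 2004, Chopin 2004) applies off the shelf, and then to evaluate the resulting variance expressions.

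For step one, let $\mathsf{U}$ denote the space of auxiliary randomness used to form $\bar{G}_p$ and write $\bar{G}_p(x,u)$ for the realised random weight, with $U_p \mid X_p = x_p$ drawn from some kernel $\nu_p(x_p,\cdot)$ satisfying $\int \bar{G}_p(x,u)\,\nu_p(x,\dd u) = G_p(x)$. Define the enlarged Feynman--Kac model on $\tilde{\mathsf{X}} = \mathsf{X} \times \mathsf{U}$ with kernels $\tilde{M}_p((x,u),\dd x'\dd u') = M_p(x,\dd x')\,\nu_p(x',\dd u')$ and potentials $\tilde{G}_p(x,u) = \bar{G}_p(x,u)$. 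The random-weight SMC is exactly the bootstrap particle filter for $(\tilde{M}_{0:n},\tilde{G}_{0:n})$, so the standard CLT yields $\bar{\sigma}_n^2(\varphi) = \sum_{p=0}^n \tilde{v}_{p,n}(\varphi)$ with
\begin{equation*}
    \tilde{v}_{p,n}(\varphi) = \frac{\tilde{\gamma}_p(1)\,\tilde{\gamma}_p(\tilde{Q}_{p,n}(\varphi)^2)}{\tilde{\gamma}_n(1)^2} - \tilde{\eta}_n(\varphi)^2.
\end{equation*}

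For step two, a straightforward induction using unbiasedness of $\bar{G}_q$ shows that $\tilde{\gamma}_p(\dd x\,\dd u) = \gamma_p(\dd x)\,\nu_p(x,\dd u)$, so in particular $\tilde{\gamma}_p(1) = \gamma_p(1)$ and $\tilde{\eta}_n(\varphi) = \eta_n(\varphi)$ for $\varphi$ depending only on $x$. For the $Q$-functions, iterating the tower property so that every $\bar{G}_q$ with $q>p$ integrates to $G_q$ against $\nu_q$ gives, for $p < n$,
\begin{equation*}
    \tilde{Q}_{p,n}(\varphi)(x,u) = \bar{G}_p(x,u)\,\frac{Q_{p,n}(\varphi)(x)}{G_p(x)},
\end{equation*}
using the identity $Q_{p,n}(\varphi)(x) = G_p(x)\,M_{p+1}(Q_{p+1,n}(\varphi))(x)$, while $\tilde{Q}_{n,n}(\varphi) = \varphi = Q_{n,n}(\varphi)$.

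For step three, square and integrate against $\tilde{\gamma}_p$. For $p<n$,
\begin{equation*}
    \tilde{\gamma}_p(\tilde{Q}_{p,n}(\varphi)^2) = \int \gamma_p(\dd x)\,\frac{Q_{p,n}(\varphi)(x)^2}{G_p(x)^2}\int \nu_p(x,\dd u)\,\bar{G}_p(x,u)^2 = \gamma_p(s_p \cdot Q_{p,n}(\varphi)^2),
\end{equation*}
since $\int \nu_p(x,\dd u)\,\bar{G}_p(x,u)^2 = G_p(x)^2[\var\{\bar{G}_p(x)/G_p(x)\}+1] = G_p(x)^2 s_p(x)$. At the boundary $p=n$, $\tilde{\gamma}_n(\tilde{Q}_{n,n}(\varphi)^2) = \gamma_n(\varphi^2) = \gamma_n(s_n\cdot Q_{n,n}(\varphi)^2)$ with $s_n \equiv 1$. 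Substituting into $\tilde{v}_{p,n}$ recovers $\bar{v}_{p,n}$, yielding the stated formula.

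The only conceptually delicate point is the formula for $\tilde{Q}_{p,n}$: the $p$-th random potential $\bar{G}_p$ is the one potential inside $\tilde{Q}_{p,n}$ whose auxiliary randomness is not integrated out by the conditional expectation given $(X_p,U_p)$, so it alone survives squared, contributing the second-moment factor $s_p$; the remainder of the argument is mechanical bookkeeping.
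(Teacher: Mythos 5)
Your proposal follows essentially the same approach as the paper's own proof: lift the random-weight filter to a standard Feynman--Kac model on $\mathsf{X}\times\mathsf{U}$ (your $\nu_p$ is the paper's $K_p$), apply the classical CLT, prove the marginal factorisation $\tilde\gamma_p(\dd x\,\dd u)=\gamma_p(\dd x)\nu_p(x,\dd u)$ by induction via the unbiasedness condition, derive the key identity $\tilde Q_{p,n}(\varphi)(x,u)=\bar G_p(x,u)\,Q_{p,n}(\varphi)(x)/G_p(x)$ (the paper's Proposition on $\tilde Q_{p,n}$), and then read off the $s_p$ factor from the second moment. The only cosmetic difference is that you assert the $\tilde Q_{p,n}$ formula by "iterating the tower property," while the paper spells out the backward induction explicitly; the substance is identical and the proof is correct.
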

\begin{remark}
A model with the $\bar G_p = G_p$ reduces to $s_p = 1$ for $p \in \intrange{0}{n}$. As such, Proposition~\ref{prop:asyvar} characterises the asymptotic local effect of the random-weights on the variance as $s_p$. 
\end{remark}
\begin{proposition}\label{prop:varbound} 
Consider the same models as Proposition~\ref{prop:asyvar} and let $\sigma_n^{2}(\varphi)$ be the asymptotic variance of $\gamma_n^{N}(\varphi)$ for measurable $\varphi$.
If, for some $0 < C < \infty$, $\sup_{p \in \intrange{0}{n}}\sup_{x_p \in \mathsf{X}}s_p(x_p) \leq C+1$, i.e. the relative variance is uniformly bounded, then
\begin{enumerate}[label=(\roman*)]
    \item $\bar{\sigma}_n^{2}(\varphi) \leq (C+1) \sigma_n^{2}(\varphi)$ if $\varphi$ is a centred function, i.e. $\eta_n(\varphi) = 0$,
    \item $\bar{\sigma}_n^{2}(1) \leq (C+1) \sigma_n^{2}(1) + nC$.
\end{enumerate}

\end{proposition}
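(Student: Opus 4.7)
The plan is to start from the decomposition $\bar{\sigma}_n^{2}(\varphi) = \sum_{p=0}^n \bar{v}_{p,n}(\varphi)$ in Proposition~\ref{prop:asyvar} and to compare it termwise with the analogous decomposition for $\sigma_n^{2}(\varphi) = \sum_{p=0}^n v_{p,n}(\varphi)$, where $v_{p,n}$ is obtained from $\bar v_{p,n}$ by setting $s_p \equiv 1$ (which corresponds to the case $\bar G_p = G_p$). Because only the factor $s_p$ distinguishes $\bar v_{p,n}$ from $v_{p,n}$, and $s_p$ enters linearly inside $\gamma_p(\,\cdot\,)$, the whole argument reduces to bounding $s_p$ by $C+1$ under the integral.

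First I would write
\begin{equation*}
\bar{v}_{p,n}(\varphi) = \frac{\gamma_p(1)\,\gamma_p\!\left(s_p\cdot Q_{p,n}(\varphi)^2\right)}{\gamma_n(1)^2} - \eta_n(\varphi)^2,
\end{equation*}
split $s_p = 1 + (s_p - 1)$, and use the hypothesis $s_p(x_p) - 1 \le C$ pointwise to obtain
\begin{equation*}
\bar{v}_{p,n}(\varphi) \le v_{p,n}(\varphi) + C\cdot\frac{\gamma_p(1)\,\gamma_p\!\left(Q_{p,n}(\varphi)^2\right)}{\gamma_n(1)^2}
= (C+1)\,v_{p,n}(\varphi) + C\,\eta_n(\varphi)^2,
\end{equation*}
after rewriting the right-most factor as $v_{p,n}(\varphi) + \eta_n(\varphi)^2$. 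This bound holds for every $p \in \intrange{0}{n}$, so summing over $p$ gives the general inequality
\begin{equation*}
\bar{\sigma}_n^{2}(\varphi) \le (C+1)\,\sigma_n^{2}(\varphi) + (n+1)\,C\,\eta_n(\varphi)^2.
\end{equation*}

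Part (i) now follows immediately because $\eta_n(\varphi)=0$ when $\varphi$ is centred, eliminating the extra term. For part (ii) I would take $\varphi = 1$ so that $\eta_n(\varphi)^2 = 1$, and exploit the fact that $s_n(x_n) = 1$ by definition. Consequently the inequality $\bar v_{p,n}(1) \le (C+1)v_{p,n}(1) + C$ becomes an equality with $C$ replaced by $0$ at $p=n$ (indeed $\bar v_{n,n}(1)=v_{n,n}(1)$), so only the indices $p \in \intrange{0}{n-1}$ contribute the additive $C$, yielding exactly $nC$ rather than $(n+1)C$. Summing produces $\bar{\sigma}_n^{2}(1) \le (C+1)\sigma_n^{2}(1) + nC$ as claimed.

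There is no real obstacle here beyond bookkeeping: the only point requiring care is to treat the terminal index $p = n$ separately in part (ii), since using the crude bound $s_n \le C+1$ there would spuriously inflate the additive constant to $(n+1)C$. Recognising that $s_n \equiv 1$ is what saves one unit of $C$ and matches the stated bound.
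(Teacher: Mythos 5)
Your proof is correct and follows the same route as the paper: bound $s_p\le C+1$ inside $\gamma_p(\cdot)$ to get $\bar v_{p,n}(\varphi)\le (C+1)v_{p,n}(\varphi)+C\eta_n(\varphi)^2$ and then sum. The one place where you are actually more careful than the paper is part (ii): the paper simply writes ``similarly'' and states the bound $(C+1)\sigma_n^2(1)+nC$, without explaining why the sum over $p\in\intrange{0}{n}$ does not produce $(n+1)C$. Your observation that $s_n\equiv 1$ forces $\bar v_{n,n}(1)=v_{n,n}(1)$ (indeed both are zero, since $Q_{n,n}$ is the identity and $\eta_n(1)=1$), so that only the $n$ indices $p\in\intrange{0}{n-1}$ contribute the additive $C$, is exactly the step that justifies $nC$ and is left implicit in the paper. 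Nothing to fix.
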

Proofs for Proposition~\ref{prop:asyvar} and \ref{prop:varbound} are given in the supplementary materials.

\begin{remark}
Proposition~\ref{prop:varbound} demonstrates that the variance of expectations from particle approximations using the random-weight model are well-behaved if the relative variance of the random-weight potential function is also well-behaved. The upper bound for the variance of the normalising constant (ii) increases linearly in $C$ and $n$.
\end{remark}
\begin{remark}
For a $\psi$-twisted model we note that the twisted model is itself a Feynman--Kac model with random weights.  Then for some $C>0$, we can deduce the requirements
 $\var[\{M_1^{\tilde{N}}(\psi_{1})(x) M_{0}^{\tilde{N}}(\psi_0)\}/\{M_1(\psi_{1})(x) M_{0}(\psi_0)\}] \leq C +1$ and $\var\{M_p^{\tilde{N}}(\psi_p)(x)/M_p(\psi_p)(x)\} \leq C +1$ for $p \in \intrange{2}{n}$ and $x \in \mathsf{X}$.
\end{remark}

\section{Example applications}

\subsection{Linear Gaussian hidden Markov model}\label{sec:exlgm}

The performance of the proposed method was tested on a linear Gaussian hidden Markov model. The normalising constant for this model can be calculated exactly to aid comparisons between particle filters. We compare some model and algorithmic settings of the Monte Carlo-based twisted-model particle filter (henceforth TPF) and twisting learning procedure with reference to the bootstrap particle filter (BPF). We run two comparative BPFs, one with approximately equivalent computation cost (Markov kernel simulations) and the other with approximately equivalent memory cost (number of particles). These illustrate the extremes of the computation performance considerations.  We also run a TPF with optimal twisting functions tempered to have an equivalent acceptance rate to the TPF with learnt twisting functions. The results below are in terms of the mean square error of the log-normalising constant relative to this tempered optimal TPF (rel-MSE). Further details and results are given in the supplementary materials.

We considered a 3-dimensional hidden Markov model with time series length $n = 200$ and observation variance $\sigma^2_G \in \{0.25,1\}$. We used three iterations of Algorithm~\ref{alg:learntwist} and $\tilde{N} \in \{25,50,100\}$ samples to estimate the twisted potential function. The simulation demonstrated that our learning procedure performed just as well as the using tempered optimal TPF (rel-MSE 0.7--1.3), significantly better than the memory-equivalent BPF (26--101), and worse than the computation-equivalent BPF (0.02--0.1).

\subsection{A stochastic volatility model}\label{sec:exsvm}
We now illustrate the methodology on the root square stochastic volatility model \citep[see][for example]{martin2019auxiliary} for a $n=2000$ time series. Stochastic volatility models are popular in modelling financial return data \citep{heston1993closed}. Here we use partial analytical twisting as the Markov kernel can be exponentially tilted. Further details are given in the supplementary materials. 
Figure~\ref{fig:svm} shows that our method (with 2 learning iterations) outperforms both memory- and computation-equivalent BPFs (``=S'' and ``=C'' respectively) as the estimated normalising constant is much closer to the true value. 

\section{Discussion}
The results of Section~\ref{sec:exlgm} suggest that if storage limitations exist or large memory usage will reduce computational performance, then the rejection-based TPFs are useful in practice. This is likely to be the case for when the state-space requires high memory usage, such as in the case of a high dimensional latent space, long time series (i.e. large~$n$), or both. The use of a twisting function improves the particle filter but is countered by the additional cost of the rejection sampler and added variance from the random weights. Determining the conditions for improvement by using Monte Carlo twisting is an interesting question left for future research. 

Section~\ref{sec:exsvm} demonstrated that partial analytical twisting can substantially improve the normalising constant estimate by improving the acceptance rate of the rejection sampler without having to alter the learnt twisting functions. This comes at the expense of applicability, compared to the rejection-only method, but nonetheless allows twisting functions to be used in new applications.
\begin{figure}[ht]
    \centering
    \includegraphics[width=0.75\textwidth]{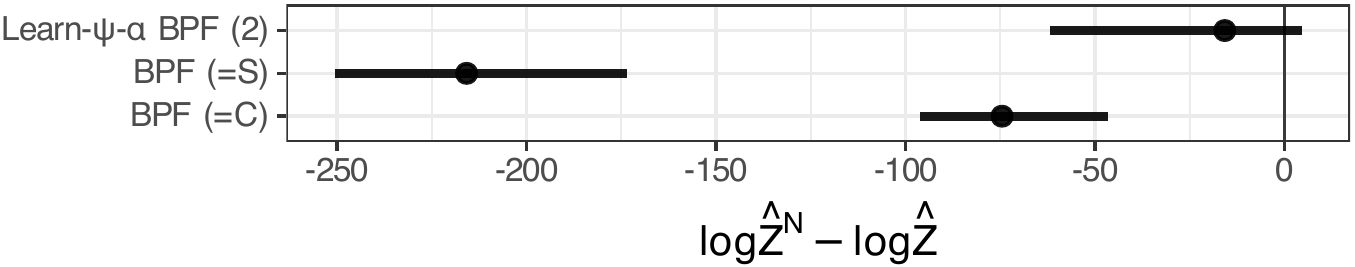}
    \caption{Normalising constant estimate error (log scale, median and 95\% interval) for the stochastic volatility model.}
    \label{fig:svm}
\end{figure}

\subsection*{Acknowledgement}
JJB, CD and AL were supported by the Australian Research Council (DP200102101). AL was also supported by the EPSRC (CoSInES, EP/R034710/1).

\subsection*{Supplementary material}\label{sec:supp}
Supplementary material are contained in the appendices for clarity. 

\bibliographystyle{unsrtnat}
\bibliography{references}

\appendix

\section*{Appendices}

\section{Proof of Proposition~\ref{prop:acceptrate}}
\subsection*{Preliminaries}

We begin by proving some fundamental results for twisted Feynman--Kac models.

\begin{proposition}\label{prop:twistedmarginals}
Consider a $\psi$-twisted model $(M_{0:n}^{\psi}, G_{0:n}^{\psi})$. The predictive marginal measures can be written as 
\begin{align*}
\gamma_0^{\psi}(\dd x_0) &= \gamma_0(\dd x_0)\frac{ \psi_0(x_0)}{M_0(\psi_0)} \\
    \gamma_{p}^{\psi}(\dd x_p) &= \gamma_{p}(\dd x_p) \psi_p(x_p), \quad p \in \intrange{1}{n}.
\end{align*}
Whilst the marginal updated measures are
\begin{align*}
    \hat{\gamma}_{p}^{\psi}(\dd x_p) &= \hat\gamma_{p}(\dd x_p)M_{p+1}(\psi_{p+1})(x_{p}), \quad p \in \intrange{0}{n-1} \\
    \hat{\gamma}_{n}^{\psi}(\dd x_n) &= \hat\gamma_{n}(\dd x_n).
\end{align*}
\end{proposition}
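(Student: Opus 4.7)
The plan is to proceed by induction on $p$ for the predictive marginal measures, and then derive the updated measures as direct consequences using $\hat{\gamma}_p^{\psi}(\dd x_p) = \gamma_p^{\psi}(\dd x_p) G_p^{\psi}(x_p)$. The key observation driving everything is that the normalising constants $M_p(\psi_p)(x_{p-1})$ appearing in the definition of $M_p^{\psi}$ cancel exactly against the corresponding factors in $G_{p-1}^{\psi}$, and similarly the factor $\psi_p(x_p)$ introduced by the twisted kernel $M_p^{\psi}$ survives into $\gamma_p^{\psi}$ while $\psi_{p-1}(x_{p-1})$ cancels out of $G_{p-1}^{\psi}$.

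For the base case, $\gamma_0^{\psi} = M_0^{\psi}$ gives the claim by definition \eqref{eq:twistmk}. For the induction step assume $\gamma_{p-1}^{\psi}(\dd x_{p-1}) = \gamma_{p-1}(\dd x_{p-1})\psi_{p-1}(x_{p-1})$ (with the obvious modification for $p=1$ to account for the $M_0(\psi_0)^{-1}$ factor, which is killed by the $M_0(\psi_0)$ factor appearing in $G_0^{\psi}$). Applying the recursion \eqref{eq:mmrecur} to the twisted model, substituting definitions \eqref{eq:twistmk} and \eqref{eq:twistpot}, and cancelling $\psi_{p-1}(x_{p-1})$ and $M_p(\psi_p)(x_{p-1})$ leaves $\int \gamma_{p-1}(\dd x_{p-1}) G_{p-1}(x_{p-1}) M_p(x_{p-1}, \dd x_p) \psi_p(x_p) = \gamma_p(\dd x_p)\psi_p(x_p)$, completing the induction.

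The statements for the updated measures then follow by one-line calculations: for $p \in \intrange{1}{n-1}$ one multiplies $\gamma_p^{\psi}(\dd x_p) = \gamma_p(\dd x_p)\psi_p(x_p)$ by $G_p^{\psi}$ and cancels $\psi_p$; for $p=n$ the terminal potential $G_n^{\psi} = G_n/\psi_n$ similarly cancels $\psi_n$ to recover $\hat{\gamma}_n$ exactly (which is also the content of Proposition~\ref{prop:opttwist} when $\psi = \psi^{\star}$); and the $p=0$ case requires one to simultaneously handle both the normalising factor $M_0(\psi_0)^{-1}$ from $\gamma_0^{\psi}$ and the explicit $M_0(\psi_0)$ inside $G_0^{\psi}$.

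I do not expect any real obstacle: the entire result is bookkeeping, since the twisting construction was designed precisely so that these cancellations hold. The only care needed is in treating $p=0$ and $p=n$ separately from the bulk indices, because $G_0^{\psi}$ and $G_n^{\psi}$ have boundary-specific forms in \eqref{eq:twistpot}. I would write the induction uniformly for $p \in \intrange{1}{n}$ and then dispatch the two boundary formulas for $\hat{\gamma}_p^{\psi}$ as separate short calculations.
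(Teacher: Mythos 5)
Your proposal is correct and follows essentially the same route as the paper's proof: an induction on $p$ exploiting the cancellation between the $M_p(\psi_p)$ normalising factor in $M_p^{\psi}$ and the same factor in $G_{p-1}^{\psi}$, with the updated measures obtained by a direct multiplication by $G_p^{\psi}$ and the boundary indices $p=0$ and $p=n$ handled separately. The paper organises the key cancellation into a displayed identity for $G_{p-1}^{\psi}(x_{p-1})M_p^{\psi}(x_{p-1},\dd x_p)$ before running the induction, but this is presentational rather than a different argument.
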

\begin{proof}
For the case of $p=0$, note that $M_0^{\psi}(\dd x_0) = M_0(\dd x_0) \psi_0(x_0) / M_0(\psi_0)$, $\gamma_0^{\psi} = M_0^{\psi}$, and~${\gamma_0 = M_0}$ by definition which gives the result. For the cases $p > 0$, first note that
\begin{equation}\label{eq:potxmut}
\begin{split}
    G_{0}^{\psi}( x_{0})M_{1}^{\psi}(x_{0},\dd x_1) &= G_{0}( x_{0})M_{1}(x_{0},\dd x_1) \frac{\psi_1(x_1)}{\psi_0(x_{0})} M_0(\psi_0)\\
    G_{p-1}^{\psi}( x_{p-1})M_{p}^{\psi}(x_{p-1},\dd x_p) &= G_{p-1}( x_{p-1})M_{p}(x_{p-1},\dd x_p) \frac{\psi_p(x_p)}{\psi_p(x_{p-1})} 
\end{split}
\end{equation} for $p \in \intrange{2}{n}$ using the definitions of the twisted potential and mutation kernel then simplifying. Now, for the case of $p=1$
the definition of the marginal measures \eqref{eq:mmrecur} and twisted marginals \eqref{eq:potxmut} yields
\begin{align*}
    \gamma_{1}^{\psi}(S) &= \int_{\mathsf{X}} \gamma_{0}^{\psi}(\dd x_{0}) G_{0}^{\psi}(x_{0})M_{1}^{\psi}(x_{0}, S) \\
     &= \int_{\mathsf{X}} \gamma_{0}(\dd x_{0}) G_{0}(x_{0})\int_{S} M_{1}(x_{0}, \dd x_1)\psi_{1}(x_1) ~\text{for}~ S \in \mathcal{X} 
\end{align*}
implying that $\gamma_{1}^{\psi}(\dd x_1) = \gamma_{1}(\dd x_1)\psi_1(x_1)$ as required. Now assume $\gamma_{p}^{\psi}(\dd x_p) = \gamma_{p}(\dd x_p)\psi_p(x_p)$ is true for some $p \in \intrange{1}{n-1}$ and consider the case of $p+1$. We again combine \eqref{eq:mmrecur} and \eqref{eq:potxmut} to yield
\begin{align*}
    \gamma_{p+1}^{\psi}(S)
     &= \int_{\mathsf{X}} \gamma_{p}^{\psi}(\dd x_{p}) \frac{G_{p}(x_{p})}{\psi_{p}(x_{p})} \int_{S} M_{p+1}(x_{p}, \dd x_{p+1})\psi_{p+1}(x_{p+1}) \\
     &= \int_{\mathsf{X}} \gamma_{p}(\dd x_{p}) G_{p}(x_{p}) \int_{S} M_{p+1}(x_{p}, \dd x_{p+1})\psi_{p+1}(x_{p+1}) ~\text{for}~ S \in \mathcal{X}
\end{align*}
implying that $\gamma_{p+1}^{\psi}(\dd x_{p+1}) = \gamma_{p+1}(\dd x_{p+1})\psi_{p+1}(x_{p+1})$. Therefore by induction $\gamma_{p}^{\psi}(\dd x_{p}) = \gamma_{p}(\dd x_{p})\psi_p(x_{p})$ for $p \in \intrange{2}{n}$ also.  

For the updated measures $\hat{\gamma}_{p}^{\psi}(\dd x_p) = \gamma_{p}^{\psi}(\dd x_p) G_{p}^{\psi}(x_p)$ for $p \in \intrange{0}{n}$ by definition. Combining with the definition of the twisted potential functions in \eqref{eq:twistpot} and the previous results for the predictive marginal measures we have
\begin{align*}
    \hat{\gamma}_{0}^{\psi}(\dd x_0) &= \frac{\gamma_{0} (\dd x_0)\psi_0(x_0)}{M_0(\psi_0)}  \frac{G_{0}(x_0)}{\psi_{0}(x_{0})}M_1(\psi_{1})(x_{0})M_{0}(\psi_0) \\
    &= \hat\gamma_{0}(\dd x_0) M_1(\psi_{1})(x_{0}),\\
    \hat\gamma_{p}^{\psi}(\dd x_p) &= \gamma_{p}(\dd x_p) \psi_p(x_p) \frac{G_p(x_p)}{\psi_p(x_p)}M_{p+1}(\psi_{p+1})(x_{p}) \\
    &= \hat\gamma_{p}(\dd x_p)M_{p+1}(\psi_{p+1})(x_{p}) ~\text{for}~ p \in \intrange{1}{n-1},~\text{and}\\
    \hat{\gamma}_{n}^{\psi}(\dd x_n) &= \gamma_{n} (\dd x_n)\psi_n(x_n)  \frac{G_{n}(x_n)}{\psi_{n}(x_{n})} \\
     &= \hat\gamma_{n} (\dd x_n)
\end{align*}
as required.
\end{proof}

\begin{proposition}\label{prop:mutationtwist}
Consider a $\psi$-twisted model  $(M_{0:n}^{\psi}, G_{0:n}^{\psi})$ and functions~${\phi_p: \mathsf{X} \rightarrow (0, \infty)}$ for~${p \in \intrange{1}{n}}$ then
\begin{align*}
    M_0(\psi_0 \cdot \phi_0) &= M_0(\psi_0)M^{\psi}_0(\phi_{0})  \\
    M_{p}(\psi_p\cdot \phi_p)(x_{p-1}) &= M_{p}(\psi_{p})(x_{p-1}) M_{p}^{\psi}(\phi_{p})(x_{p-1}), \quad p \in \intrange{1}{n}.
\end{align*}
\end{proposition}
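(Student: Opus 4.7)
The plan is to prove each identity by a direct change-of-measure calculation using the definition of the twisted Markov kernels in \eqref{eq:twistmk}. Both claims are essentially rearrangements of the defining equations; no induction or auxiliary results are needed.

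For the first identity, I would start from the definition $M_0^\psi(\dd x_0) = M_0(\dd x_0)\psi_0(x_0)/M_0(\psi_0)$, apply it to compute
\begin{equation*}
    M_0^\psi(\phi_0) = \int_{\mathsf{X}} \phi_0(x_0) M_0^\psi(\dd x_0) = \frac{1}{M_0(\psi_0)}\int_{\mathsf{X}}\psi_0(x_0)\phi_0(x_0) M_0(\dd x_0) = \frac{M_0(\psi_0\cdot\phi_0)}{M_0(\psi_0)},
\end{equation*}
and then multiply through by $M_0(\psi_0)$ to obtain the stated identity. (One should note that $M_0(\psi_0) \in (0,\infty)$ is guaranteed by the assumption that $M_p(\psi_p)$ is bounded and $\psi_p$ is strictly positive, so dividing by it is legitimate.)

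For the second identity, I would proceed analogously. Fix $p \in \intrange{1}{n}$ and $x_{p-1} \in \mathsf{X}$, and use the kernel definition $M_p^\psi(x_{p-1},\dd x_p) = M_p(x_{p-1},\dd x_p)\psi_p(x_p)/M_p(\psi_p)(x_{p-1})$ to write
\begin{equation*}
    M_p^\psi(\phi_p)(x_{p-1}) = \int_{\mathsf{X}}\phi_p(x_p) M_p^\psi(x_{p-1},\dd x_p) = \frac{M_p(\psi_p\cdot\phi_p)(x_{p-1})}{M_p(\psi_p)(x_{p-1})},
\end{equation*}
then rearrange. Again $M_p(\psi_p)(x_{p-1})$ is finite and positive by the standing assumptions on $\psi_p$.

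There is no real obstacle here: the proof is a two-line manipulation for each statement. The only thing to be careful about is ensuring that $M_p(\psi_p)$ is positive and finite so that the division is well-defined, which follows from the earlier hypotheses $\psi_p : \mathsf{X} \to (0,\infty)$ with $M_p(\psi_p)$ bounded.
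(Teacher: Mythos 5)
Your proof is correct and follows essentially the same route as the paper's: apply the definition of the twisted kernel to $\phi_p$, recognise the resulting integral as $M_p(\psi_p\cdot\phi_p)/M_p(\psi_p)$, and rearrange. The only addition is your explicit remark on why division by $M_p(\psi_p)$ is legitimate, which the paper leaves implicit.
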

\begin{proof}
By definition of the twisted mutation kernels applied to $\phi_p$ are
\begin{align*}
    M^{\psi}_{0}(\phi_0) &= \int_{\mathsf{X}}\frac{M_{0}(\dd x_0)\psi_{0}(x_0)}{M_0(\psi_{0})}\phi_0(x_0) = \frac{M_0(\psi_{0} \cdot \phi_0)}{M_0(\psi_{0})} ~\text{and}\\
    M^{\psi}_{p}(\phi_p)(x_{p-1}) &= \int_\mathsf{X} \frac{M_{p}(x_{p-1}, \dd x_{p})\psi_p(x_p)}{M_p(\psi_{p})(x_{p-1})}\phi_p(x_p) = \frac{M_{p}(\psi_p\cdot\phi_p)(x_{p-1})}{M_p(\psi_{p})(x_{p-1})}, \quad p \in \intrange{1}{n}.
\end{align*}
Rearranging these equations gives the result.
\end{proof}
The following result was also stated in \citet{heng2020controlled}.
\begin{proposition}\label{prop:twisttwist}
Consider a $\psi$-twisted model $(M_{0:n}^{\psi}, G_{0:n}^{\psi})$. Twisting this model with functions~${\phi_p: \mathsf{X} \rightarrow (0, \infty)}$ for~${p \in \intrange{1}{n}}$ results in a $(\psi\cdot \phi)$-twisted Feynman--Kac model, that is $(M_{0:n}^{\psi\cdot \phi}, G_{0:n}^{\psi\cdot \phi})$.
\end{proposition}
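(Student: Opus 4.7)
The plan is a direct verification by substitution: apply the twisting definitions \eqref{eq:twistmk} and \eqref{eq:twistpot} with twisting functions $\phi_{0:n}$ to the kernels $(M_{0:n}^{\psi},G_{0:n}^{\psi})$, and then simplify using Proposition~\ref{prop:mutationtwist} to show the resulting objects coincide exactly with $(M_{0:n}^{\psi\cdot\phi}, G_{0:n}^{\psi\cdot\phi})$. The key algebraic identity that drives every case is $M_p^{\psi}(\phi_p)(x_{p-1}) = M_p(\psi_p\cdot\phi_p)(x_{p-1})/M_p(\psi_p)(x_{p-1})$, together with its $p=0$ analogue $M_0^{\psi}(\phi_0) = M_0(\psi_0\cdot\phi_0)/M_0(\psi_0)$, both immediate from Proposition~\ref{prop:mutationtwist}.

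I would handle the Markov kernels first. For $p=0$, substitute the definition of $M_0^{\psi}$ into $(M_0^{\psi})^{\phi}(\dd x_0) = M_0^{\psi}(\dd x_0)\phi_0(x_0)/M_0^{\psi}(\phi_0)$ and use the identity above; the factors of $M_0(\psi_0)$ cancel, leaving $M_0(\dd x_0)\psi_0(x_0)\phi_0(x_0)/M_0(\psi_0\cdot\phi_0) = M_0^{\psi\cdot\phi}(\dd x_0)$. For $p \in \intrange{1}{n}$, the same computation applied fibrewise to $(M_p^{\psi})^{\phi}(x_{p-1},\dd x_p)$ yields $M_p(x_{p-1},\dd x_p)\psi_p(x_p)\phi_p(x_p)/M_p(\psi_p\cdot\phi_p)(x_{p-1}) = M_p^{\psi\cdot\phi}(x_{p-1},\dd x_p)$.

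I would then treat the potentials in the three boundary cases. For $p = n$ it is immediate since the twist only divides by the twisting function: $(G_n^{\psi})^{\phi} = G_n^{\psi}/\phi_n = G_n/(\psi_n\phi_n) = G_n^{\psi\cdot\phi}$. For $1 \leq p \leq n-1$, substituting the formula for $G_p^{\psi}$ from \eqref{eq:twistpot} into $(G_p^{\psi})^{\phi}(x_p) = G_p^{\psi}(x_p)\phi_p(x_p)^{-1}M_{p+1}^{\psi}(\phi_{p+1})(x_p)$ and applying the Proposition~\ref{prop:mutationtwist} identity to the last factor produces cancellation of $M_{p+1}(\psi_{p+1})(x_p)$, yielding $G_p(x_p)M_{p+1}(\psi_{p+1}\cdot\phi_{p+1})(x_p)/(\psi_p(x_p)\phi_p(x_p)) = G_p^{\psi\cdot\phi}(x_p)$. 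The $p=0$ case is the longest but structurally identical: there are two factors to simplify, $M_1^{\psi}(\phi_1)(x_0)$ and $M_0^{\psi}(\phi_0)$, and both identities from Proposition~\ref{prop:mutationtwist} are invoked; the denominators $M_1(\psi_1)(x_0)$ and $M_0(\psi_0)$ cancel with the factors appearing in $G_0^{\psi}$.

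There is no real obstacle here; the proof is essentially bookkeeping once Proposition~\ref{prop:mutationtwist} is in hand. The only mildly tricky point is to keep the three cases for the potentials ($p=0$, $1 \le p \le n-1$, $p=n$) straight and to ensure the two normalising factors in $G_0^{\psi}$ are handled simultaneously so that the final expression matches $G_0^{\psi\cdot\phi}$ exactly as given in \eqref{eq:twistpot}.
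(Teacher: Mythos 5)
Your proof is correct and follows essentially the same route as the paper: direct substitution of the twisting definitions \eqref{eq:twistmk}--\eqref{eq:twistpot} into the doubly-twisted objects, with Proposition~\ref{prop:mutationtwist} supplying the identity $M_p^{\psi}(\phi_p) = M_p(\psi_p\cdot\phi_p)/M_p(\psi_p)$ that drives each cancellation. The only minor difference is that you explicitly isolate the $p=n$ potential case, whereas the paper folds it into the $p \in \intrange{1}{n}$ computation somewhat loosely; your version is marginally more careful but not a different argument.
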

\begin{proof}
By definition we have 
\begin{equation*}
     (M^{\psi}_{0})^{\phi}(\dd x_0) = \frac{M^{\psi}_{0}(\dd x_0)\phi_{0}(x_0)}{M^{\psi}_0(\phi_{0})} 
    = \frac{M_{0}(\dd x_0)\psi_{0}(x_0)\phi_{0}(x_0)}{M_0(\psi_{0}) M^{\psi}_0(\phi_{0})}
\end{equation*}
and using Proposition~\ref{prop:mutationtwist} we can see $(M^{\psi}_{0})^{\phi}(\dd x_0) = M^{\psi\cdot\phi}_{0}(\dd x_0)$. Likewise, by definition
\begin{equation*}
    (M^{\psi}_{p})^{\phi}(x_{p-1}, \dd x_{p}) = \frac{M^{\psi}_{p}(x_{p-1}, \dd x_{p})\phi_p(x_p)}{M^{\psi}_p(\phi_{p})(x_{p-1})} = \frac{M_p(x_{p-1}, \dd x_{p})\psi_p(x_p)\phi_p(x_p)}{M_p(\psi_{p})(x_{p-1})M^{\psi}_p(\phi_{p})(x_{p-1})},
\end{equation*}
then with Proposition~\ref{prop:mutationtwist} we see that $(M^{\psi}_{p})^{\phi}(x_{p-1}, \dd x_{p}) = M^{\psi \cdot \phi}_{p}(x_{p-1}, \dd x_{p})$ for $p \in \intrange{1}{n}$.
As for the potential functions, by definition
\begin{align*}
    (G^{\psi}_{0})^{\phi}(x_0) &= G_{0}^{\psi}(x_0)\frac{M_1^{\psi}(\phi_{1})(x_{0})}{\phi_{0}(x_{0})}M_{0}^{\psi}(\phi_0)\\ &= G_{0}(x_0)\frac{M_1(\psi_{1})(x_{0})M_1^{\psi}(\phi_{1})(x_{0})}{\psi_{0}(x_{0})\phi_{0}(x_{0})}M_{0}(\psi_0)M_{0}^{\psi}(\phi_0)
\end{align*}
then with Proposition~\ref{prop:mutationtwist} $(G^{\psi}_{0})^{\phi}(x_0) = G^{\psi\cdot\phi}_{0}(x_0)$. Similarly, by definition
\begin{align*}
    (G^{\psi}_{p})^{\phi}(x_p) = G^{\psi}_{p}(x_p)\frac{M^{\psi}_{p+1}(\phi_{p+1})(x_{p})}{\phi_{p}(x_{p})} = G_{p}(x_p)\frac{M_{p+1}(\psi_{p+1})(x_{p})M^{\psi}_{p+1}(\phi_{p+1})(x_{p})}{\psi_{p}(x_{p})\phi_{p}(x_{p})}
\end{align*}
for $p \in \intrange{1}{n}$ and again with Proposition~\ref{prop:mutationtwist} we have $(G^{\psi}_{p})^{\phi}(x_p) = G^{\psi\cdot \phi}_{p}(x_p)$. Thus, the result holds as the mutation kernels and potential functions are equivalent.
\end{proof}

\subsection*{Proof of Proposition~\ref{prop:acceptrate}}\label{aproof:acceptrate}
\begin{proof}
Assuming $\sup_{x \in \mathsf{X}}\omega_p(x)\varrho_p(x)^{-1} = 1$ ensures that the (conditional) acceptance rates of the rejection samplers are 
$M_{p}^{\varrho}(\omega_{p} / \varrho_p)$ for $p \in \intrange{0}{n}$. These can be simplified to 
\begin{align*}
    M_{0}^{\varrho}(\omega_{0} / \varrho_0) &=   M_{0}(\omega_{0}) M_{0}(\varrho_0)^{-1} \\
    M_{p}^{\varrho}(\omega_{p} / \varrho_p)(x) &= M_{p}(\omega_{p})(x) M_{p}(\varrho_p)(x)^{-1}, \quad p \in \intrange{1}{n}, \quad x \in \mathsf{X}
\end{align*}
using their definitions. This gives the result for $p=0$.

Now we turn our attention to the marginal measures, for use in the case of $p > 0$. Let the twisting functions be decomposed as $\omega_p = \psi_p \cdot \phi_p$ where~$\phi_p = \omega_p / \psi_p$ for~${p \in \intrange{1}{n}}$. As such, the updated marginal measures of the $\omega$-twisted model
are $\hat{\gamma}_{p-1}^{\omega} = \hat\gamma_{p-1}^{\psi\cdot\phi}$. Then from Proposition~\ref{prop:twisttwist} in combination with Proposition~\ref{prop:twistedmarginals} we can deduce
\begin{equation*}
    \hat{\gamma}_{p-1}^{\omega}(\dd x_{p-1}) =  (\hat\gamma_{p-1}^{\psi})^\phi(\dd x_{p-1}) = \hat{\gamma}_{p-1}^{\psi}(\dd x_{p-1}) M_{p}^{\psi}(\phi_{p})(x_{p-1}).
\end{equation*}
for $p \in \intrange{1}{n}$. Hence the normalised measures are
\begin{equation}\label{eq:accconv}
    \hat{\eta}_{p-1}^{\omega}(\dd x_{p-1}) = \frac{\hat{\gamma}_{p-1}^{\psi}(\dd x_{p-1}) M_{p}^{\psi}(\phi_{p})(x_{p-1})}{\hat{\gamma}_{p-1}^{\psi}\{( M_{p}^{\psi}(\phi_{p})\}} 
    = \frac{\hat{\eta}_{p-1}^{\psi}(\dd x_{p-1}) M_{p}^{\psi}(\phi_{p})(x_{p-1})}{\hat{\eta}_{p-1}^{\psi}\{ M_{p}^{\psi}(\phi_{p})\}}
\end{equation}
for $p \in \intrange{1}{n}$. From Proposition~\ref{prop:mutationtwist} we can write the function~$M_{p}^{\psi}(\phi_{p})$ as~${M_{p}^{\psi}(\phi_{p}) = M_{p}(\omega_{p}) / M_{p}(\psi_{p})}$
and substitute this term throughout \eqref{eq:accconv} to gain 
\begin{equation*}
    \hat{\eta}_{p-1}^{\omega}(\dd x_{p-1}) = \frac{\hat{\eta}_{p-1}^{\psi}(\dd x_{p-1}) M_{p}(\omega_{p})(x_{p-1}) M_{p}(\psi_{p})(x_{p-1})^{-1}}{\hat{\eta}_{p-1}^{\psi}\left\{ M_{p}(\omega_{p}) / M_{p}(\psi_{p}) \right\}}.
\end{equation*}

Finally, we recall $M_p^\varrho(\omega_p / \varrho_p) = M_p(\omega_p) / M_p(\varrho_p)$, and calculate~$\hat{\eta}_{p-1}^{\omega}M_p^\varrho(\omega_p / \varrho_p)$ for the result.
\end{proof}

\section{Algorithms for controlling the acceptance rate}

Algorithm~\ref{alg:learntemper} uses a particle approximation for the expressions in Proposition~\ref{prop:acceptrate} to determine the temperature~$\beta_p$ that targets a desired acceptance rate.
\begin{algorithm}
\caption{Targeting an acceptance rate at time $p \in \intrange{1}{n}$ by tempering}
\label{alg:learntemper}
Given particles $\hat{\zeta}_{p-1}^{1:N}  \sim \hat\eta_{p-1}^{\psi}$ from a $\psi$-twisted model, target acceptance rate $\alpha_{\text{min}}$, proposed updated twisting function $\psi_p \cdot \lambda_p$, and number of  Monte Carlo samples $\tilde{N}$,
\begin{enumerate}
    \item Draw $\zeta_{p}^{i,j} \sim M_p(\hat{\zeta}_{p-1}^{i}, \cdot)$ for $i \in \intrange{1}{N}$ and $j \in \intrange{1}{\tilde{N}}$
    \item Let $a_p^{i}(\beta, q) =  \left(\sum_{j=1}^{\tilde{N}}(\psi_p \cdot \lambda_p)(\zeta_p^{i,j})^{\beta} \right)^q  \left( \sum_{j=1}^{\tilde{N}}\psi_p(\zeta_p^{i,j}) \right)^{-1}$ \item[] Let
    $\alpha_p^N(\beta) = \frac{\sum_{i=1}^{N} a_p^i(\beta, 2)}{\sum_{i=1}^{N} a_p^i(\beta, 1)}$, the approximate acceptance rate for twisting function~${\omega_p = (\psi_p \cdot \lambda_p)^{\beta}}$
    \item Calculate $\beta_p$ by
    \begin{enumerate}[label=(\roman*)]
        \item If $\alpha_p^N(1) \geq \alpha_{\text{min}}$ let $\beta_p =1$, otherwise
        \item Set $\beta_p\in(0,1)$ such that $\alpha_p^N(\beta_p) \approx \alpha_{\text{min}}$ with a line search
    \end{enumerate}
\end{enumerate}
\textbf{Output:} $\beta_p$
\end{algorithm}

For time $p=0$ we can draw samples from the initial distribution and choose the temperature using a line search akin to Step 3 of Algorithm~\ref{alg:learntemper}. 

Using a temperature parameter, such as $\beta$, is not the only possible choice of transformation to control the acceptance rate.  Applying a lower bound to the twisting functions can increase the acceptance rate, or an upper bound with appropriate scaling. However, the temperature method is the easiest to implement in the context of iteratively learning the twisting functions.  Moreover, some testing with lower bounds showed poor performance as the bounded twisting functions were quite different from the unbounded twisting functions with respect to the base path measure.

Algorithm~\ref{alg:maxacc} uses a particle approximation for the expressions in Proposition~\ref{prop:acceptrate} to determine the best function to use for partial twisting.
\begin{algorithm}
\caption{Maximising the acceptance rate at time $p \in \intrange{1}{n}$ by partial analytical twisting}
\label{alg:maxacc}
Given particles $\hat{\zeta}_{p-1}^{1:N}  \sim \hat\eta_{p-1}^{\psi}$ from a $\psi$-twisted model, proposed updated twisting function $\psi_p \cdot \lambda_p$, $\mathsf{R}$ a tractable family for the partial twisting function $\varrho$, and chosen number of $\tilde{N}$ Monte Carlo samples,
\begin{enumerate}
    \item Draw $\zeta_{p}^{i,j} \sim M_p(\hat{\zeta}_{p-1}^{i}, \cdot)$ for $i \in \intrange{1}{N}$ and $j \in \intrange{1}{\tilde{N}}$
    \item Let $a_p^{i}(q) =  \left(\sum_{j=1}^{\tilde{N}}(\psi_p \cdot \lambda_p)(\zeta_p^{i,j}) \right)^q  \left( \sum_{j=1}^{\tilde{N}}\psi_p(\zeta_p^{i,j}) \right)^{-1}$ and
    \item[] $h_p^i(\varrho) =  \left( \sum_{j=1}^{\tilde{N}}\varrho(\zeta_p^{i,j}) \right)^{-1}$.
    \item[] Let
    $\alpha_p^N(\varrho) = \frac{\sum_{i=1}^{N} a_p^i( 2)h_p^i(\varrho)}{\sum_{i=1}^{N} a_p^i(1)}$, the approximate acceptance rate for twisting function~${\omega_p = \psi_p \cdot \lambda_p}$
    \item Find $\varrho_p = \max_{\varrho \in \mathsf{R}} \alpha_p^N(\varrho)$
\end{enumerate}
\textbf{Output:} $\varrho_p$
\end{algorithm}

\section{Random-weight sequential Monte Carlo}

Consider a Feynman--Kac model with initial distribution $M_0(\dd x_0)$, Markov kernels $M_p(x_{p-1}, \dd x_p)$ defined on $(\mathsf{X}, \mathcal{X})$ for $p \in \intrange{1}{n}$, and potential functions~$G_p(x_p)$ for $p \in \intrange{0}{n}$. We define a random-weight Feynman--Kac model, with respect to the aforementioned target model as follows.
\begin{definition}[Random-weight model]
Define the Feynman--Kac model on an extended space $(\tilde{\mathsf{X}}^{n+1}, \tilde{\mathcal{X}}^{n+1})$ where $\tilde{\mathsf{X}} = \mathsf{X} \times \mathsf{U}$ and $\tilde{\mathcal{X}} = \mathcal{X} \otimes \mathcal{U}$ with $\tilde{x}_p = (x_p, u_p)$ such that the initial distribution is \begin{equation*}
    \tilde{M}_0(\dd \tilde{x}_0) = M_0(\dd x_0)K_0(x_0, \dd u_0) 
\end{equation*} 
and the Markov kernels are
\begin{equation*}
    \tilde{M}_p(\tilde{x}_{p-1}, \dd \tilde{x}_p) = M_p(x_{p-1}, \dd x_p)K_p(x_{p}, \dd u_p)~\text{for}~ p \in \intrange{1}{n}
\end{equation*} where the Markov kernels $K_p(x, \cdot)$ are defined on the measurable space $(\mathsf{U}, \mathcal{U})$ for $x \in \mathsf{X}$. The potential functions $\tilde{G}_p(\tilde{x}_p)$ for $p \in \intrange{0}{n}$ are such that 
\begin{equation}\label{eq:potunb}
    \int_{\mathsf{U}}\tilde{G}_p(\tilde{x}_p)K_p(x_p,\dd u_p) = G_p(x_p).
\end{equation}
\end{definition}
We first prove some preliminary results used in the proof of Proposition~\ref{prop:asyvar} which follows.
\begin{proposition}\label{prop:marginalrw}
The time $p$ marginal measures of the random-weight model satisfy
\begin{equation}
    \tilde{\gamma}_p(\dd \tilde{x}_p) = \gamma_p(\dd x_p)K_p(x_p, \dd u_p), \quad p \in \intrange{0}{n}. \label{eq:marginalrw}
\end{equation}
\end{proposition}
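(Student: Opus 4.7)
The plan is a straightforward induction on $p$, using the Feynman--Kac recursion \eqref{eq:mmrecur} applied on the extended space $(\tilde{\mathsf{X}}^{n+1},\tilde{\mathcal{X}}^{n+1})$, together with the unbiasedness identity \eqref{eq:potunb} which is the whole point of the random-weight construction.

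For the base case $p=0$, equation \eqref{eq:marginalrw} follows immediately from $\tilde\gamma_0 = \tilde M_0$ and $\gamma_0 = M_0$ by the definition of $\tilde M_0$. For the inductive step, I would suppose \eqref{eq:marginalrw} holds at time $p$ and write out the recursion
\begin{equation*}
    \tilde\gamma_{p+1}(\dd \tilde x_{p+1}) = \int_{\tilde{\mathsf{X}}} \tilde\gamma_p(\dd \tilde x_p)\tilde G_p(\tilde x_p)\tilde M_{p+1}(\tilde x_p, \dd \tilde x_{p+1}).
\end{equation*}
Substituting the inductive hypothesis and the definition of $\tilde M_{p+1}$ gives an integrand that factorises as $\gamma_p(\dd x_p)\,K_p(x_p,\dd u_p)\,\tilde G_p(x_p,u_p)\,M_{p+1}(x_p,\dd x_{p+1})\,K_{p+1}(x_{p+1},\dd u_{p+1})$. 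Crucially, neither $M_{p+1}$, nor $K_{p+1}$, nor (after the product) the measure $K_{p+1}(x_{p+1},\dd u_{p+1})$ depends on $u_p$, so by Fubini I can first integrate out $u_p$ against $K_p(x_p,\dd u_p)$, using \eqref{eq:potunb} to collapse $\int_{\mathsf{U}}\tilde G_p(x_p,u_p)K_p(x_p,\dd u_p) = G_p(x_p)$. What remains is $\gamma_p(\dd x_p)G_p(x_p)M_{p+1}(x_p,\dd x_{p+1})K_{p+1}(x_{p+1},\dd u_{p+1})$, and integrating $x_p$ out recovers $\gamma_{p+1}(\dd x_{p+1})K_{p+1}(x_{p+1},\dd u_{p+1})$ by the original Feynman--Kac recursion \eqref{eq:mmrecur}, completing the induction.

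This argument is essentially bookkeeping, so I do not anticipate a conceptual obstacle; the only thing to be careful about is to handle the integration against the $\sigma$-finite product measures in the right order (so that \eqref{eq:potunb} can be applied pointwise in $x_p$) and to note explicitly that $K_{p+1}$ and $M_{p+1}$ have no $u_p$ dependence, which is what makes the auxiliary variables at successive times effectively decoupled.
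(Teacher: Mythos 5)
Your argument is correct and follows essentially the same route as the paper's own proof: an induction on $p$ where the base case is immediate from the definitions, and the inductive step substitutes the hypothesis into the Feynman--Kac recursion on the extended space and uses the unbiasedness condition \eqref{eq:potunb} to integrate out $u_p$ and reduce to the original recursion. Your added remarks about Fubini and the lack of $u_p$-dependence in $M_{p+1}$ and $K_{p+1}$ just make explicit what the paper's computation does implicitly.
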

\begin{proof}
Let $S = A \times B \in \tilde{\mathcal{X}}$.
The case of $p = 0$ holds since $\tilde{\gamma}_0(S) = \int_{A}M_0(\dd x_0)K_0(x_0,B)$ and~$\gamma_0(\dd x_0) = M_0(\dd x_0)$. Now assume \eqref{eq:marginalrw} is true for some $p \in \intrange{0}{n-1}$, then for~${p+1}$, using \eqref{eq:potunb}, we have
\begin{align*}
    \tilde{\gamma}_{p+1}(S) &= \int_{\mathsf{X} \times \mathsf{U}} \tilde{\gamma}_{p}(\dd \tilde{x}_{p})\tilde{G}_{p}(\tilde{x}_{p})\tilde{M}_{p+1}(\tilde{x}_{p},S) \\
    &= \int_{\mathsf{X} \times \mathsf{U}} \gamma_{p}(\dd x_{p})K_p(x_{p},\dd u_p)\tilde{G}_{p}(\tilde{x}_{p})\tilde{M}_{p+1}(\tilde{x}_{p},S) \\
    &= \int_{\mathsf{X}} \gamma_{p}(\dd x_{p})G_{p}(x_{p})\tilde{M}_{p+1}(\tilde{x}_{p},S)
\end{align*}
and therefore $\tilde{\gamma}_{p+1}(\dd \tilde{x}_{p+1}) = \gamma_{p+1}(\dd x_{p+1})K_{p+1}(x_{p+1},\dd u_{p+1})$. As such, the proposition is true by induction.
\end{proof}

Proposition~\ref{prop:marginalrw} demonstrates that the effect of the new kernels $K_p$ is localised to the time $p$ marginal measure (of distribution after normalisation) due to the unbiasedness condition in \eqref{eq:potunb}. Moreover, if we consider the marginal of $x_p$, we have 
\begin{equation}\label{eq:pmarginalequality}
    \tilde{\gamma}_p(\varphi \otimes 1) = \gamma_p(\varphi)
\end{equation}
for measurable $\varphi$. As such, expectations involving $x_{0:n}$ with respect to either measure are equivalent, and the normalising constant for the random-weight model is equal to the target model. Moreover, a particle filter for the random-weight model will also generate unbiased estimates of the normalising constant as it is a Feynman--Kac model targeting the correct marginal measure.

Let $Q_p$ be non-negative kernels defined as 
\begin{equation}\label{eq:Qp}
    Q_p(x_{p-1},\dd x_{p}) = G_{p-1}(x_{p-1})M_p(x_{p-1}, \dd x_p), \quad p \in \intrange{1}{n}
\end{equation} 
then define products of these non-negative kernels (also non-negative kernels) as
\begin{equation}\label{eq:Qpn}
\begin{split}
Q_{p,n} &= Q_{p+1} \cdots Q_n, \quad p \in \intrange{0}{n-1}\\
Q_{n,n} &= \text{Id}
\end{split}
\end{equation}
where Id is the identity kernel. We consider the asymptotic variance in the central limit theorem for  $\gamma_n^N(\varphi)/\gamma_n(1)$, defined as
\begin{align}
        &\sigma^2_n(\varphi) \equiv \sum_{p=0}^{n} v_{p,n}(\varphi), ~\text{where} \label{eq:asymvar}\\
& v_{p,n}(\varphi) \equiv \frac{\gamma_p(1)\gamma_p\left(Q_{p,n}(\varphi)^2\right)}{\gamma_n(1)^2} - \eta_{n}(\varphi)^2
\end{align}
for bounded $\varphi$. The central limit theorem \citep[Chapter 9,][]{del2004feynman} states
\begin{equation*}
    N^{1/2}\left\{\frac{\gamma_n^N(\varphi) - \gamma_n(\varphi)}{\gamma_n(1)}\right\} \rightarrow \mathcal{N}(0, \sigma^2_n(\varphi))
\end{equation*}
as $n \rightarrow \infty$ where convergence is in distribution.

We wish to consider the effect of using an unbiased estimate of the potential function on the asymptotic variance. We will continue to refer to objects related to the random-weight model with the tilde notation, and without as the target Feynman--Kac model.

\begin{proposition}\label{prop:tQpn}
The random-weight model has $\tilde{Q}_{p,n}$ such that
\begin{equation}\label{eq:potremainder}
\begin{split}
    \tilde{Q}_{p,n}(\tilde{x}_p, \dd \tilde{x}_n) &= \frac{\tilde{G}_{p}(\tilde{x}_p)}{G_p(x_p)} R_{p,n}(x_p, \dd \tilde{x}_n)\\
    R_{p,n}(x_p, \dd \tilde{x}_n) &= Q_{p,n}(x_p,\dd x_n)K_n(x_n,\dd u_n)
\end{split}
\end{equation}
for $p \in \intrange{0}{n-1}$.
\end{proposition}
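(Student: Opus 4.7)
The plan is to prove this by backward induction on $p$, starting from the base case $p = n-1$ and working down to $p=0$. The core idea is that the intermediate auxiliary variables $u_{p+1}, \ldots, u_{n-1}$ integrate out cleanly thanks to the unbiasedness property \eqref{eq:potunb}, leaving only the ``current'' factor $\tilde{G}_p(\tilde{x}_p)/G_p(x_p)$ and the terminal kernel $K_n(x_n, \dd u_n)$.

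For the base case $p = n-1$, we have $\tilde{Q}_{n-1,n} = \tilde{Q}_n$ by definition of \eqref{eq:Qpn}. Expanding,
\begin{equation*}
    \tilde{Q}_n(\tilde{x}_{n-1}, \dd \tilde{x}_n) = \tilde{G}_{n-1}(\tilde{x}_{n-1}) M_n(x_{n-1}, \dd x_n) K_n(x_n, \dd u_n),
\end{equation*}
and multiplying and dividing by $G_{n-1}(x_{n-1})$ exhibits this as $[\tilde{G}_{n-1}(\tilde{x}_{n-1})/G_{n-1}(x_{n-1})] \cdot R_{n-1,n}(x_{n-1}, \dd \tilde{x}_n)$.

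For the inductive step, suppose the claim holds at time $p+1$. Using $\tilde{Q}_{p,n} = \tilde{Q}_{p+1} \tilde{Q}_{p+1,n}$ and the definitions of $\tilde{Q}_{p+1}$ and $\tilde{M}_{p+1}$, I would write
\begin{equation*}
    \tilde{Q}_{p,n}(\tilde{x}_p, \dd \tilde{x}_n) = \tilde{G}_p(\tilde{x}_p) \int M_{p+1}(x_p, \dd x_{p+1}) K_{p+1}(x_{p+1}, \dd u_{p+1}) \frac{\tilde{G}_{p+1}(\tilde{x}_{p+1})}{G_{p+1}(x_{p+1})} R_{p+1,n}(x_{p+1}, \dd \tilde{x}_n).
\end{equation*}
The critical observation is that $R_{p+1,n}(x_{p+1}, \dd \tilde{x}_n)$ does not depend on $u_{p+1}$, so by Fubini I can perform the $u_{p+1}$ integral first. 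The unbiasedness condition \eqref{eq:potunb} then gives $\int K_{p+1}(x_{p+1}, \dd u_{p+1}) \tilde{G}_{p+1}(\tilde{x}_{p+1}) = G_{p+1}(x_{p+1})$, canceling the denominator.

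All that remains is to recognise the surviving expression as $\tilde{G}_p(\tilde{x}_p) R_{p,n}(x_p, \dd \tilde{x}_n)/G_p(x_p)$. Using the definitions \eqref{eq:Qp}--\eqref{eq:Qpn},
\begin{equation*}
    \int M_{p+1}(x_p, \dd x_{p+1}) Q_{p+1,n}(x_{p+1}, \dd x_n) = G_p(x_p)^{-1} Q_{p,n}(x_p, \dd x_n),
\end{equation*}
and multiplying by $K_n(x_n, \dd u_n)$ yields $R_{p,n}(x_p, \dd \tilde{x}_n)/G_p(x_p)$, completing the induction. There is no substantive obstacle here; the proof is a routine bookkeeping exercise, and the only mild care needed is to justify the Fubini step and to track that $R_{p+1,n}$ depends on $u$ only through $u_n$, ensuring the intermediate auxiliary kernels collapse one by one into the target-model kernel $Q_{p,n}$.
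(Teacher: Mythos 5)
Your proof is correct and follows essentially the same argument as the paper's: backward induction starting at $p=n-1$, expanding $\tilde{Q}_{p,n}$ via the semigroup recursion, applying Fubini and the unbiasedness condition \eqref{eq:potunb} to integrate out the intermediate auxiliary variable, and then recognising the remaining integral as $G_p(x_p)^{-1}R_{p,n}(x_p,\cdot)$ via the definition of $Q_{p,n}$. The only difference is cosmetic indexing (you pass from $p+1$ to $p$; the paper passes from $p$ to $p-1$).
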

\begin{proof}
By definition $\tilde{Q}_{p}(\tilde{x}_{p-1}, \dd \tilde{x}_{p}) = \tilde{G}_{p-1}(\tilde{x}_{p-1})\tilde{M}_p(\tilde{x}_{p-1}, \dd \tilde{x}_p)$ and 
\begin{align*}
    \tilde{Q}_{n-1,n}(\tilde{x}_{n-1}, \dd \tilde{x}_n) &= \tilde{G}_{n-1}(\tilde{x}_{n-1})\tilde{M}_p(\tilde{x}_{n-1}, \dd \tilde{x}_n) \\
    &= \frac{\tilde{G}_{n-1}(\tilde{x}_{n-1})}{G_{n-1}(x_{n-1})} G_{n-1}(x_{n-1})M_p(x_{n-1}, \dd x_n)K_n(x_{n}, \dd u_n) \\
    &= \frac{\tilde{G}_{n-1}(\tilde{x}_{n-1})}{G_{n-1}(x_{n-1})} Q_{n-1,n}(x_{n-1}, \dd x_n)K_n(x_{n}, \dd u_n) \\
    &= \frac{\tilde{G}_{n-1}(\tilde{x}_{n-1})}{G_{n-1}(x_{n-1})}R_{n-1,n}(x_{n-1}, \dd \tilde{x}_n)
\end{align*}
as such \eqref{eq:potremainder} is true for $p = n-1$. Now, assume \eqref{eq:potremainder} is true for some $p \in \intrange{1}{n-1}$, and consider the case of $\tilde{Q}_{p-1,n}$, for $S \in \mathcal{X} \times \mathcal{U}$ we have
\begin{align*}
    \tilde{Q}_{p-1,n}(\tilde{x}_{p-1}, S) &= \tilde{G}_{p-1}(\tilde{x}_{p-1})\int_{\mathsf{X} \times \mathsf{U}}\tilde{M}_p(\tilde{x}_{p-1}, \dd \tilde{x}_p) \tilde{Q}_{p,n}(\tilde{x}_{p}, S) \\
    &=  \tilde{G}_{p-1}(\tilde{x}_{p-1})\int_{\mathsf{X} \times \mathsf{U}} M_p(x_{p-1}, \dd x_p)K_p(x_{p}, \dd u_p) \frac{\tilde{G}_{p}(\tilde{x}_p)}{G_p(x_p)} R_{p,n}(x_p, S) \\
     &=  \tilde{G}_{p-1}(\tilde{x}_{p-1})\int_{\mathsf{X}} M_p(x_{p-1}, \dd x_p) R_{p,n}(x_p, S) \\
     &=  \frac{\tilde{G}_{p-1}(\tilde{x}_{p-1})}{G_{p-1}(x_{p-1})} R_{p-1,n}(x_p, S)
\end{align*}
as required. Therefore, \eqref{eq:potremainder} is true for $p \in \intrange{0}{n-1}$ by induction.
\end{proof}
Proposition~\ref{prop:tQpn} is the tool that allows us to relate the asymptotic variance of the random-weight model to the target model. In particular, we derive an explicit form for the variance in Proposition~\ref{prop:asyvar}, which considers the effect of the random-weights on the $v_{p,n}$ terms.

\subsection*{Proof of Proposition~\ref{prop:asyvar}}\label{aproof:asyvar}

\begin{proof}
Noting that $\tilde{\gamma}_p(1\otimes 1)=\gamma_p(1)$ and $\tilde{\eta}_p(\varphi\otimes 1)=\eta_p(\varphi)$ from \eqref{eq:pmarginalequality}, the $\tilde{v}_{p,n}$ are
\begin{equation*}
    \tilde{v}_{p,n}(\varphi \otimes 1) = \frac{\gamma_p(1)\tilde{\gamma}_p\left(\tilde{Q}_{p,n}(\varphi \otimes 1)^2\right)}{\gamma_n(1)^2} - \eta_{n}(\varphi)^2
\end{equation*}
for $p \in \intrange{0}{n}$. The relative second moment is 
\begin{equation*}
    s_p(x_p) = E\left( \frac{\bar{G}_{p}(x_p)^2}{G_p(x_p)^2}\right)  = E\left(\frac{\tilde{G}_{p}\{(x_p, U_p)\}^2}{G_p(x_p)^2}\right)  = \int_{\mathsf{U}} K_p(x_p,\dd u_p) \frac{\tilde{G}_{p}(\tilde{x}_p)^2}{G_p(x_p)^2}
\end{equation*}
for $x_p \in \mathsf{X}$, from which we can write
\begin{equation*}
    \tilde{\gamma}_p\left(\tilde{Q}_{p,n}(\varphi \otimes 1)^2\right) = \int_{\mathsf{X}\times \mathsf{U}} \gamma_p(\dd x_p) s_p(x_p) Q_{p,n}(\varphi)(x_p)^2 
    =\gamma_p(s_p \cdot Q_{p,n}(\varphi)^2)
\end{equation*}
for $p \in \intrange{0}{n-1}$ using Proposition~\ref{prop:tQpn}. For $p = n$ noting that $\tilde{Q}_{n,n}$ and $Q_{n,n}$ are identity kernels,
\begin{equation*}
    \tilde{\gamma}_n\left(\tilde{Q}_{n,n}(\varphi \otimes 1)^2\right) = \tilde{\gamma}_n\left\{(\varphi \otimes 1)^2\right\} = \tilde{\gamma}_n(\varphi^2 \otimes 1) =\gamma_n(\varphi^2) = \gamma_n(s_n \cdot Q_{n,n}(\varphi)^2)
\end{equation*}
with $s_n(x_n) = 1$ for $x_n \in \mathsf{X}$.
Finally, we can show
\begin{equation*}
    \bar{v}_{p,n}(\varphi) \equiv \tilde{v}_{p,n}(\varphi \otimes 1) = \frac{\gamma_p(1)\gamma_p(s_p \cdot Q_{p,n}(\varphi)^2)}{\gamma_n(1)^2} - \eta_{n}(\varphi)^2
\end{equation*}
by substitution for $p \in \intrange{0}{n}$ as required. Noting that~$\bar\sigma_n^{2}(\varphi) = \tilde{\sigma}_n^{2}(\varphi \otimes 1)$ concludes the proof.
\end{proof}

\subsection*{Proof of Proposition~\ref{prop:varbound}}\label{aproof:varbound}
If $\sup_{p \in \intrange{0}{n}}\sup_{x_p \in \mathsf{X}}s_p(x_p) \leq C+1$ for some $0<C<\infty$ then 
\begin{equation}\label{eq:boundvpn}
    \bar{v}_{p,n}(\varphi) \leq (C+1)\frac{\gamma_p(1)\gamma_p( Q_{p,n}(\varphi)^2)}{\gamma_n(1)^2} - \eta_{n}(\varphi)^2
\end{equation}
for $p \in \intrange{0}{n}$ by Proposition~\ref{prop:asyvar}. Then noting that 
\begin{equation*}
    v_{p,n}(\varphi) = \frac{\gamma_p(1)\gamma_p( Q_{p,n}(\varphi)^2)}{\gamma_n(1)^2} - \eta_{n}(\varphi)^2
\end{equation*}
for the non-approximate model, we can write \eqref{eq:boundvpn} as 
\begin{equation*}
    \bar{v}_{p,n}(\varphi) \leq (C+1)\left\{v_{p,n}(\varphi) + \eta_{n}(\varphi)^2\right\} - \eta_{n}(\varphi)^2 =  (C+1)v_{p,n}(\varphi) + C\eta_{n}(\varphi)^2.
\end{equation*}
If $\varphi$ is a centred function then $\eta_{n}(\varphi) = 0$ and we can deduce that  $\bar{\sigma}_n^{2}(\varphi) \leq (C+1) \sigma_n^{2}(\varphi)$ from \eqref{eq:asymvar}. Whilst if $\varphi = 1$ then $\eta_{n}(\varphi) = 1$ and $\bar{\sigma}_n^{2}(1) \leq (C+1) \sigma_n^{2}(1) + nC$ similarly.

\section{Details for examples}

\subsection*{Section \ref{sec:exlgm} example details}

The hidden Markov model considered has initial distribution $M_0(\dd x_0) = \dNorm(x_0; \mu_0, \sigma_0^2 I_d)\dd x_0$  and mutation kernels $M_p(x_{p-1}, \dd x_{p}) = \dNorm(x_p; A x_{p-1}, \sigma_M^2 I_d)\dd x_p$ for $p \in \intrange{1}{n}$ where $\dNorm(x; m, \Sigma)$ denotes the multivariate Gaussian density at $x$ with mean $m$ and covariance matrix $\Sigma$. The potential functions are $G_p(x_p) = \dNorm(y_p; x_p, \sigma_G^2 I_d)$ for $p \in \intrange{0}{n}$ where~$x_p,y_p, \mu_0 \in \R^d$, $\sigma_0^2,\sigma_M^2,\sigma_G^2 > 0$, $I_d$ is the $d$ dimensional identity matrix, and $A \in \R^{d\times d}$.

We consider a model with $d=3$, $\mu_0 = [1~1~ 1]^{\top}$, $\sigma_0^2 = \sigma_M^{2} = 1$, and the elements of matrix $A$ are such that $A_{i,j} = a^{\vert i-j\vert + 1}$ with $a = 0.42$. We choose $n = 200$ as the time series length. We use a dynamic resampling scheme based on effective sample size, resampling the particles only when the effective sample size drops below $N/2$ in a given iteration. 

The twisting functions are estimated with Algorithm~\ref{alg:learntwist} a total of $I = 3$ times where an isotropic exponential quadratic form is assumed for the class of twisting functions $\mathsf{H}_p$ for $p \in \intrange{0}{n-1}$. The minimum acceptance target is set to  $\alpha_\text{min} = 0.04$ for the first iteration to learn the twisting functions, followed by $0.02$ and $0.01$. As a safe guard against an excessively costly rejection sampler, we make a final transformation of the twisting functions as $\max(\psi_p(x_p), 5\times 10^{-4})$. This guarantees the acceptance rate is at least $5\times 10^{-4}$, though empirical results suggest Algorithm~\ref{alg:learntemper} is successful at targeting $\alpha_\text{min}$.

We test configurations with $N = 200$ particles for the twisted-model particle filters where ${\sigma_G^{2} \in \{0.25,1.0\}}$ and the number of Monte Carlo samples for the twisted potential approximation~${\tilde{N} \in \{25,50,100\}}$ with 100 repetitions for each setting. We use the same $\tilde{N}$ for Algorithms~\ref{alg:learntwist} and \ref{alg:learntemper}, though they need not be the same. The number of particles $N$ used for the computationally comparative BPF is calculated in each repetition by $\lceil N\times (3 \times R + 4 \times \tilde{N}) \rceil$ where $R$ is the mean number of iterations used by the rejection sampler over the 3 iterations. The additional $\tilde{N}$ is used to approximately account for the cost of learning the twisting functions, in particular the acceptance rate calculation. In contrast, the comparative storage BPF uses $N + \tilde{N}$ particles which results in a total storage cost of $n(N + \tilde{N})$ if the whole path is kept, whilst the twisted-model particle filters only require $nN + \tilde{N}$ storage space.

\begin{figure}
    \includegraphics[width=0.99\textwidth]{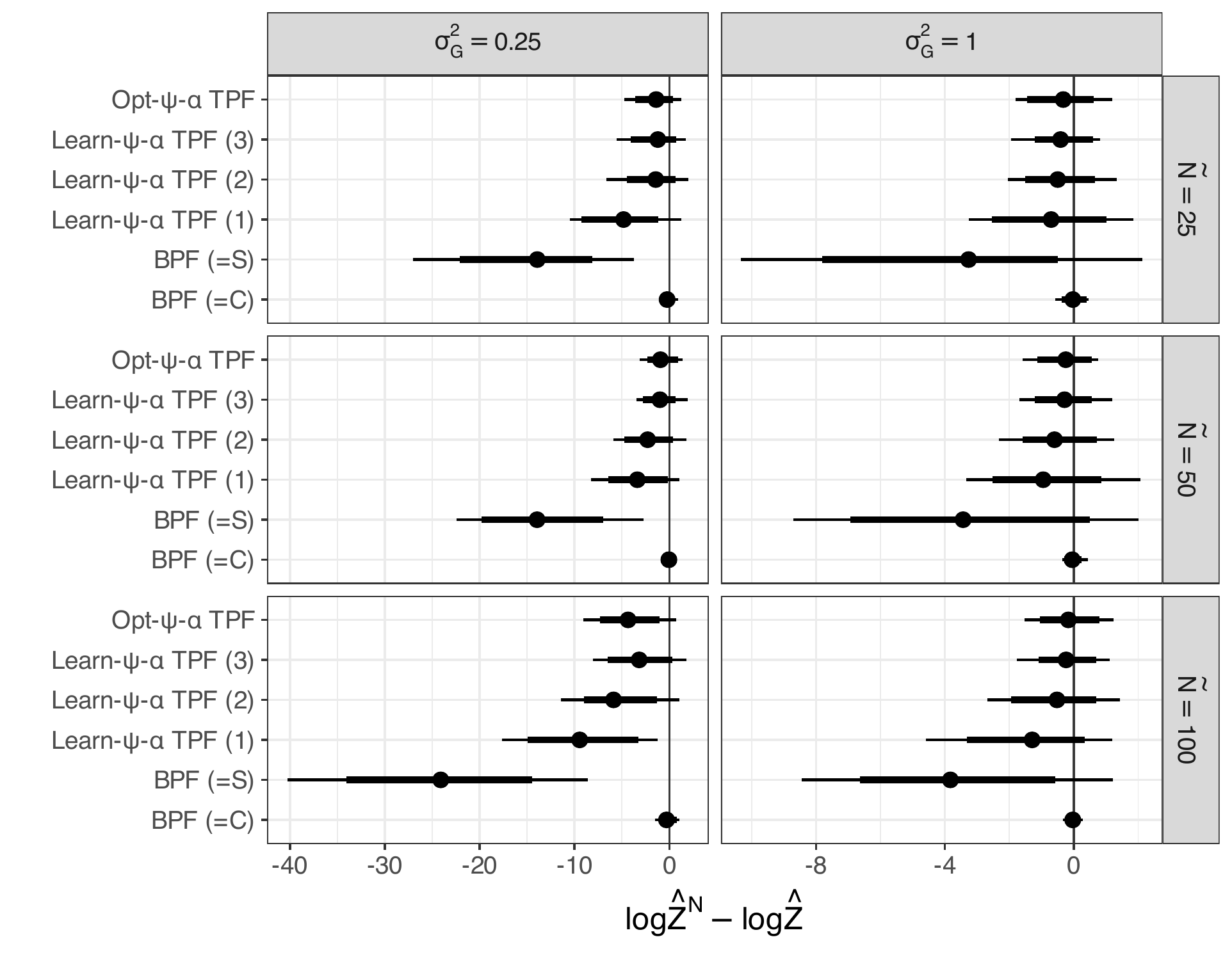}
    \caption{Error in normalising constant estimate (log scale, median, 90\% and 95\% intervals) for linear Gaussian hidden Markov model with varying observational variance $\sigma^2_{G}$ and Monte Carlo estimation of twisted potential functions $\tilde{N}$ over 100 repetitions. BPF (=C) and BPF (=S) are the comparative bootstrap particle filters in terms of computation and storage respectively. The particle filters using the optimal twisting functions (Opt-$\psi$-$\alpha$ TPF) and the learnt twisting functions after $I$ iterations (Learn-$\psi$-$\alpha$ TPF ($I$)) are tempered to target an acceptance rate $\alpha$. Values closer to 0 indicate better performance.}
    \label{fig:lghmm}
\end{figure}

\begin{table}
    \centering
    \begin{tabular}[t]{r r r r r}
\hline
$\sigma^2_G$ & $\tilde{N}$ & BPF (=C) & BPF (=S) & Learn-$\psi$-$\alpha$ TPF (3)\\
\hline
0.25 & 25 & 0.06 & 52.43 & 1.23\\

 & 50 & 0.05 & 100.82 & 1.20\\

 & 100 & 0.02 & 26.02 & 0.72\\
\hline

1.00 & 25 & 0.11 & 29.37 & 0.88\\

 & 50 & 0.10 & 43.14 & 1.31\\

 & 100 & 0.05 & 32.95 & 0.99\\
\hline
\end{tabular}
    \caption{Relative mean squared error of log-normalising constant, compared to tempered optimal twisted-model particle filter (Opt-$\psi$-$\alpha$). Tempering chosen to target acceptance rate $\alpha_\text{min} = 0.01$. Values closer to 0 indicate better performance.}
    \label{tab:relMSE-gauss}
\end{table}

\begin{table}
    \centering
    \begin{tabular}[t]{l r r r r r}
\hline
 &   &   &  \multicolumn{3}{c}{$\hat{Z}^N / \hat{Z}$} \\
 & $\tilde{N}$ & $\sigma^{2}_{G}$ & Mean & 10\% & 90\% \\
\hline
BPF (=C) & 25 & 0.25 & 0.94 & 0.40 & 1.54\\

BPF (=S) &   &   & 0.05 & 0.00 & 0.00\\



Learn-$\psi$-$\alpha$ TPF (3) &   &   & 1.58 & 0.02 & 2.02\\

Opt-$\psi$-$\alpha$ TPF &   &   & 0.63 & 0.03 & 1.43\\
\hline
BPF (=C) &  25 & 1.00 & 1.03 & 0.69 & 1.50\\

BPF (=S) &   &   & 2.68 & 0.00 & 0.61\\



Learn-$\psi$-$\alpha$ TPF (3) &   &   & 0.86 & 0.30 & 1.80\\

Opt-$\psi$-$\alpha$ TPF &   &   & 0.95 & 0.23 & 1.86\\
\hline
BPF (=C) & 50 & 0.25 & 0.98 & 0.59 & 1.38\\

BPF (=S) &   &   & 0.02 & 0.00 & 0.00\\



Learn-$\psi$-$\alpha$ TPF (3) &   &   & 0.95 & 0.06 & 1.95\\

Opt-$\psi$-$\alpha$ TPF &   &   & 0.84 & 0.10 & 2.47\\
\hline
BPF (=C) & 50 & 1.00 & 0.99 & 0.75 & 1.26\\

BPF (=S) &   &   & 0.74 & 0.00 & 1.65\\



Learn-$\psi$-$\alpha$ TPF (3) &   &   & 0.97 & 0.30 & 1.76\\

Opt-$\psi$-$\alpha$ TPF &   &   & 0.90 & 0.33 & 1.75\\
\hline
BPF (=C) & 100 & 0.25 & 1.04 & 0.33 & 2.14\\

BPF (=S) &   &   & 0.00 & 0.00 & 0.00\\



Learn-$\psi$-$\alpha$ TPF (3) &   &   & 0.66 & 0.00 & 1.40\\

Opt-$\psi$-$\alpha$ TPF &   &   & 0.36 & 0.00 & 0.35\\
\hline
BPF (=C) & 100 & 1.00 & 0.98 & 0.77 & 1.20\\

BPF (=S) &   &   & 0.41 & 0.00 & 0.56\\



Learn-$\psi$-$\alpha$ TPF (3) &   &   & 1.00 & 0.33 & 2.00\\

Opt-$\psi$-$\alpha$ TPF &   &  & 1.08 & 0.35 & 2.20\\
\hline
\end{tabular}
    \caption{Mean, 10\% and 90\% quantiles of normalising constant estimate ratio $\hat{Z}^N / \hat{Z}$ for linear Gaussian hidden Markov model with varying observational variance $\sigma^2_{G}$ and MC estimation of twisted potential functions $\tilde{N}$ over 100 repetitions. Values closer to 1 indicate better performance.}
    \label{tab:lghmm}
\end{table}

\subsection*{Section \ref{sec:exsvm} example details}
We consider the square root stochastic volatility model \citep[see][for example]{martin2019auxiliary} of the de-meaned logarithmic return $R_t$ defined by
\begin{equation}\label{eq:svm}
    R_t = X_t^{1/2}Z_t, \quad
    d X_t = (\phi_1 - \phi_2 X_{t}) d t +\phi_3 X_t^{1/2} dW_t
\end{equation}
where $Z_t \sim \dNorm(0,\sigma^2)$ and time-varying variance $X_t$ is defined by a stochastic differential equation (SDE) driven by Brownian motion $W_t$. Unit increments of the SDE have conditional distribution $(X_t \cd X_{t-1} = x) \sim \chi^{2}(2c, 2q+2, 2u)$ denoting a scaled non-central chi-squared distribution with degrees of freedom $2q+2$, non-centrality parameter $2u$, scaled by~$(2c)^{-1}$. The parameters are defined as $c = 2\phi_2 \phi_3^{-2}\{1 - \exp(-\phi_2)\}$, $u = c x \exp(-\phi_2)$, and $q = 2\phi_1 \phi_3^{-2} - 1$. The structural parameters are restricted to~$2\phi_1 \geq \phi_3^2$ and $0 < a \leq \phi_1, \phi_2, \phi_3 \leq b$ for some $a,b$.

The observations are the returns $R_t$ for $t \in \intrange{1}{n}$ used to define the potential functions. The density of the scaled non-central chi-square distribution, used for the mutation kernels, can be expressed with a Bessel function. These functions are tractable but can be costly to evaluate multiple times. Moreover, defining twisted versions of this stochastic volatility model introduces intractability as the mutation kernels can only be twisted with very simple functions. 

We use exponential quadratic twisting functions to define the twisted Feynman--Kac model. The non-central chi-square distribution has a compound Poisson-Gamma representation which we can use to perform part of the twisting analytically. We consider functions of the form
\begin{equation}
\begin{split}
    \phi(x) &= \max\left(\min \left(\phi^q(x), q_{\text{max}}\right), \alpha_{\text{min}}\right)\phi^l(x), \\
    \phi^q(x) &= \exp(a x^2 + b_q x + c), \quad
    \phi^l(x) = \exp(b_l x),
\end{split}
\label{eq:svtwist}
\end{equation}
for $x \in \R^+$ at each time point. The log-linear component of the twisting function, $\phi^l(x)$,  can be incorporated analytically into the non-central chi-square distribution as we show in the final section of Appendix~4. Thus we can improve the acceptance rates using partial analytical twisting without changing the current twisted Feynman--Kac model. A maximum value for the quadratic term, $q_\text{max}$, appears in \eqref{eq:svtwist} so that the quadratic coefficient can be either positively or negatively signed. The $q_\text{max}$ value is set to the maximum of the input data in the regression within Algorithm~\ref{alg:learntwist} and adjusted according to the quadratic-linear decomposition of \eqref{eq:svtwist} in Algorithm~\ref{alg:maxacc}. We use a minimum acceptance threshold of $\alpha_\text{min}  = 0.002$ for the example in this section.

We simulate a time series of length $n= 2000$ with $\phi_1 =0.1$, $\phi_2 = 0.5$, $\phi_3 = 0.1$, and $\sigma = 0.25$ according to the stochastic process defined in \eqref{eq:svm} with unit time increments. To test the competing particle filters we use parameters $\phi_1 =0.09$, $\phi_2 = 0.45$, $\phi_3 = 0.11$, and $\sigma = 0.25$. For the twisted-model particle filter (TPF) we use two learning iterations with $N_0 = 100$ particles and $N = 250$ for the final particle filter. The number of Monte Carlo samples for the twisted potential functions is $\tilde{N}_M = 2$, whilst the Monte Carlo samples for learning the twisting functions is $\tilde{N}_L = 5$.

All particle filters used a resampling threshold of $\kappa = 0.5$. We estimated the true normalising constant $\hat{Z}$ by using the mean of 10 repetitions of a BPF with $N = 20,000$ particles.

\subsection*{Exponentially tilted Poisson-Gamma distribution} \label{sec:etcsq}

We denote the (scaled) non-central chi-squared distribution by $\chi^{2}(k,\lambda,\beta)$ with degrees of freedom $k>0$, non-centrality parameter $\lambda > 0$ and rate $\beta > 0$. The probability density function can be written in several ways, here we use the compound Poisson-Gamma form, $(Y~\vert~V) \sim \text{Gam}(\alpha + V,\beta)$ and $V \sim \text{Pois}(\eta)$ with $\beta$ as the rate parameter. The (scaled) non-central chi-squared distribution is equivalent to the compound Poisson-Gamma distribution when $\alpha = k/2$ and $\eta = \lambda/2$.

The density of the compound Poisson-Gamma distribution is
\begin{equation*}
    f_X(x;\eta,\alpha, \beta) = \sum_{v=0}^{\infty}f_Y(x; \alpha + v, \beta)f_V(v;\eta)
\end{equation*}
where $f_Y$ is the Gamma density function and $f_V$ is the Poisson mass function.

We wish to tilt (or twist) $f_X$ by $\phi: \R^{+} \rightarrow \R^{+}$, an integrable function, resulting in the distribution with density
\begin{equation*}
    f_X^{\phi}(x;\eta,\alpha,\beta) = \frac{ f_X(x;\eta,\alpha,\beta)\phi(x) }{\mathcal{Z}_{X,\phi}}
\end{equation*}
where $\mathcal{Z}_{X,\phi} = \int_{\R^{+}} \phi(x) f_X(x;\eta,\alpha,\beta)  \dd x$.

Using the compound Poisson-Gamma form this can be written as
\begin{align*}
    f_X^{\phi}(x;\eta,\alpha, \beta) &= \frac{\phi(x)}{\mathcal{Z}_{X,\phi}}\sum_{v=0}^{\infty}f_Y(x; \alpha + v, \beta)f_V(v;\eta) \\
    &= \sum_{v=0}^{\infty}\frac{f_Y(x; \alpha + v, \beta)\phi(x)}{\mathcal{Z}_{Y,\phi}(v)}\frac{f_V(v;\eta)\mathcal{Z}_{Y,\phi}(v)}{\mathcal{Z}_{X,\phi}}.
\end{align*}

If $\phi(x) = \exp\{-b x\}$ then $\mathcal{Z}_{Y,\phi}(v) = \left(1 + b/\beta \right)^{-\alpha - v}$ for $b > -\beta$ per the moment-generating function of the Gamma distribution. The exponentially tilted Gamma distribution is itself a Gamma distribution with density $f_Y(x; \alpha + v, \beta +b )$. As for the Poisson density,
\begin{align*}
    \frac{f_V(v;\eta)\mathcal{Z}_{Y,\phi}(v)}{\mathcal{Z}_{X,\phi}} &= \mathcal{Z}_{X,\phi}^{-1} \frac{\eta^v \exp\{-\eta\}}{v!}\left(1 + b/\beta \right)^{-\alpha - v} \\
    &= \mathcal{Z}_{X,\phi}^{-1} \frac{\left(\frac{\eta\beta}{\beta + b}\right)^v \exp\left\{-\frac{\eta\beta}{\beta + b}\right\} }{v!} \exp\left\{\frac{\eta\beta}{\beta + b}-\eta\right\}\left(1 + b/\beta \right)^{-\alpha}.
\end{align*}
Therefore we have the exponentially tilted Poisson-Gamma given by
\begin{align*}
    (Y~\vert~V) &\sim \text{Gam}(\alpha + V,\beta + b) \\
    V &\sim \text{Pois}\left(\frac{\eta\beta}{\beta + b}\right)
\end{align*}
with normalising constant
$\mathcal{Z}_{X,\phi} = \exp\left\{\frac{\eta\beta}{\beta + b}-\eta\right\}\left(1 + b/\beta \right)^{-\alpha}$.

The expectation $E(Y)$ is
\begin{equation*}
    E(Y) = E(E(Y~\vert~V)) = \frac{\alpha + E(V)}{\beta + b} = \frac{\alpha}{\beta +b} + \frac{\eta\beta}{(\beta + b)^2}.
\end{equation*}

\end{document}